\tikzset{every state/.style={minimum size=0pt}}
\tikzset{XOR/.style={draw,circle,append after command={
        [shorten >=\pgflinewidth, shorten <=\pgflinewidth,]
        (\tikzlastnode.north) edge (\tikzlastnode.south)
        (\tikzlastnode.east) edge (\tikzlastnode.west)
        }
    }
}
\newtheorem{theorem}{Theorem}[section]
\newtheorem{corollary}[theorem]{Corollary}
\newtheorem{lemma}[theorem]{Lemma}
\newtheorem{proposition}[theorem]{Proposition}
\theoremstyle{definition}
\newtheorem{definition}[theorem]{Definition}
\theoremstyle{definition}
\newtheorem{remark}[theorem]{Remark}
\newcommand\DELETE[1]{}
\newcommand{\xor}{\oplus}
\begin{document}


\title{\bf Bounds and extremal graphs for monitoring edge-geodetic sets in graphs}
\author{{\sc Florent Foucaud}$\,^{a}$, {\sc Clara Marcille}$\,^{b}$, \\{\sc 
Zin Mar Myint}$\,^{c}$, {\sc 
R. B. Sandeep}$\,^{c}$, {\sc 
Sagnik Sen}$\,^{c}$, {\sc 
S Taruni}$\,^{c}$\\
\mbox{}\\
{\small $(a)$ Université Clermont Auvergne, CNRS, Clermont Auvergne INP,} \\ {\small Mines Saint-\'Etienne LIMOS, 63000 Clermont-Ferrand, France.}\\
{\small $(b)$ Univ. Bordeaux, CNRS,  Bordeaux INP, LaBRI, UMR 5800,}\\ {\small F-33400, Talence, France.}\\
{\small $(c)$ Indian Institute of Technology Dharwad, India.}\\
}

\date{\today}

\maketitle

\begin{abstract}
A monitoring edge-geodetic set, or simply an MEG-set, of a graph $G$ is a vertex subset $M \subseteq V(G)$ such that given any edge $e$ of $G$, $e$ lies on every shortest $u$-$v$ path of $G$, for some $u,v \in M$. 
The monitoring edge-geodetic number of $G$, denoted by $meg(G)$, is the minimum cardinality of such an MEG-set. 
This notion provides a graph theoretic model of the network monitoring problem. 

In this article, we compare $meg(G)$ with some other graph theoretic parameters stemming from the network monitoring problem and provide examples of graphs having prescribed values for each of these parameters. We also characterize graphs $G$ that have $V(G)$ as their minimum MEG-set, which settles an open problem due to Foucaud \textit{et al.} (CALDAM 2023), and prove that some classes of graphs fall within this characterization. We also provide a general upper bound for $meg(G)$ for sparse graphs in terms of their girth, and later refine the upper bound using the chromatic number of $G$. We examine the change in $meg(G)$ with respect to two fundamental graph operations: clique-sum and subdivisions. In both cases, we provide a lower and an upper bound of the
possible amount of changes and provide (almost) tight examples. 

\end{abstract}

\noindent \textbf{Keywords:} Geodetic set, Monitoring edge geodetic set, $k$-clique sum, Subdivisions, Chromatic number, Girth.

\section{Introduction}
In the field of network monitoring, the networking components are monitored for faults and evaluated to maintain and optimize their availability. In order to detect failures, one of the popular methods for such monitoring processes involves setting up distance probes \cite{bampas2015network,beerliova2006network,bejerano2003robust,foucaud2022monitoring}. At any given time, a distance probe can measure the distance to any other probe in the network. If there is any failure in the connection, then the probes should be able to detect it as there would be a change in the distances between the components. Such networks can be modeled by graphs whose vertices represent the components and the edges represent the connections between them. We select a subset of vertices of the graph and call them probes. This concept of probes that can measure distances in graphs has many real-life applications, for example, it is useful in the fundamental task of routing \cite{dall2006exploring,govindan2000heuristics}, or using path-oriented tools to monitor IP networks \cite{bejerano2003robust}, or problems concerning network verification \cite{bampas2015network,beerliova2006network,bilo2010discovery}. Based on the requirements of the networks, there have been various related parameters that were defined on graphs in order to study the problem and come up with an effective solution. To name a few, we may mention 
the geodetic number~\cite{chartrand2002geodetic,chartrand2002geodeticsurvey,dourado2010some,harary1993geodetic},
the edge-geodetic number~\cite{atici2003edge,santhakumaran2007edge},
the strong edge-geodetic number~\cite{irvsivc2018strong,MKXAT17},
and the distance-edge monitoring number~\cite{foucaud2022monitoring}.
The focus of this article is on studying the \textit{monitoring edge-geodetic number} of a graph, a concept related to the above ones and introduced in~\cite{foucaud2023monitoring} (see also~\cite{foucaud2023monitoringfull,haslegrave2023monitoring}).

\subsection{Main definition}\label{subsec:defs}
We deal with simple graphs, unless otherwise stated, and will use standard graph terminology and notation according to West~\cite{west2001introduction}. 

\begin{definition}
Let $G$ be a graph. A pair of vertices $u,v \in V(G)$, or any vertex subset $M \subseteq V(G)$ containing them, \textit{monitors} an edge $e \in E(G)$ if $e$ belongs to all shortest paths between $u$ and $v$.  
A \emph{monitoring edge-geodetic set} or simply, an \emph{MEG-set} of $G$  is a vertex subset $M \subseteq V(G)$ that monitors every edge of $G$. 
 The \emph{monitoring edge-geodetic number}, denoted by $meg(G)$, is the 
 cardinality of a minimum MEG-set of $G$. 
\end{definition}

The notion of MEG-sets were introduced in~\cite{foucaud2023monitoring}, motivated by the above network monitoring application: the vertices of the MEG-set are distance-probes, that can measure the distance between each other. By the definition of an MEG-set, if some edge of the network fails, then there is at least one pair of probes whose distance changes, and so, the system of probes is able to detect the failure.

\subsection{Previous works}
In the paper that introduced MEG-sets~\cite{foucaud2023monitoring}, the value of $meg(G)$, when $G$ belongs to some basic graph families, were determined. These graph families include the family of trees, unicyclic graphs, complete graphs, 
complete multipartite graphs, rectangular grids, and hypercubes. The authors of~\cite{foucaud2023monitoring} also showed a relation with the feedback edge set number $f$ and the number $\ell$ of leaves of the graph: $meg(G)\leq 9f+\ell-8$ (for $f\geq 2$). This was improved to $meg(G)\leq 3f+\ell+1$ (which is tight up to an additive factor of 1) in~\cite{CFH23}. In~\cite{haslegrave2023monitoring}, $meg(G)$ when $G$ is the result of certain types of graph products was studied, in particular, Cartesian products and strong products. The case when $G$ is a corona product was studied in~\cite{foucaud2023monitoringfull}. In~\cite{haslegrave2023monitoring}, it was shown that determining $meg(G)$ is an NP-complete problem. This was refined to hold for graphs of maximum degree at most~9 in~\cite{foucaud2023monitoringfull}.

\subsection{Related parameters}
There are some network monitoring based graph parameters studied in the literature, whose definitions are relevant in our context since they are related to $meg(\cdot)$ (see Section~\ref{sec gap}).  We list them below.

\begin{itemize}
    \item[-] A \textit{geodetic set} of a graph $G$ is a vertex subset $S \subseteq V(G)$ such that every vertex of $G$ lies on some shortest path between 
    two vertices $u,v \in S$. The \textit{geodetic number}, denoted by $g(G)$, is the minimum $|S|$, where $S$ is a geodetic set of $G$. The concept was introduced by Harary et al. in 1993~\cite{harary1993geodetic} and received considerable attention since then, both from the structural side~\cite{chartrand2002geodetic,chartrand2002geodeticsurvey,dourado2010some} and from the algorithmic side~\cite{CDFGLR20,KK22}.

    \item[-] An \textit{edge-geodetic set} of a graph $G$ is a vertex subset $S \subseteq V(G)$ such that every edge of $G$ lies on some shortest path between 
    two vertices $u,v \in S$. The \textit{edge-geodetic number}, denoted by $eg(G)$, is the minimum $|S|$, where $S$ is an edge-geodetic set of $G$. This was introduced in 2003 by Atici et al.~\cite{atici2003edge} and further studied from the structural angle~\cite{santhakumaran2007edge} as well as algorithmic angle~\cite{CGR23,DIT21}.

    \item[-] A \textit{strong edge-geodetic set} of a graph $G$ is a vertex subset $S \subseteq V(G)$ and an assignment of a particular shortest $u$-$v$ path $P_{uv}$ to each pair of distinct vertices $u,v \in S$ such that every edge of $G$ lies on $P_{uv}$ for some $u,v \in S$. The \textit{strong edge-geodetic number}, denoted by $seg(G)$, is the minimum $|S|$, where $S$ is a strong edge-geodetic set of $G$. This concept was introduced in 2017 by Manuel et al.~\cite{MKXAT17}. See~\cite{irvsivc2018strong} for some structural studies, and~\cite{DIT21} for some algorithmic results.

     \item[-] A \textit{distance edge-monitoring set} of a graph $G$ is a vertex subset $S \subseteq V(G)$ such that for every edge $e$ of $G$, there are two vertices $x\in S$ and $y\in G$ for which $e$ lies on all shortest paths between $x$ and $y$. Let $dem(G)$ be the smallest size of a 
     distance edge-monitoring set
     of $G$. 
     This concept was introduced in 2020 by Foucaud et al.~\cite{foucaud2022monitoring}. Refer to~\cite{foucaud2022monitoring, foucaud2020dem} for both structural and algorithmic results for this parameter.
\end{itemize}

\begin{figure}[ht!]
\centering
\tikzstyle{mybox}=[line width=0.5mm,rectangle, minimum height=.8cm,fill=white!70,rounded corners=1mm,draw]
\tikzstyle{myedge}=[line width=0.5mm]
\newcommand{\tworows}[2]{\begin{tabular}{c}{#1}\\[-1mm]{#2}\end{tabular}}
\scalebox{0.8}{\begin{tikzpicture}[node distance=10mm]
 		\node[mybox] (vc)  {Vertex Cover Number};
 		 \node[mybox] (mln) [above right = of vc,xshift=15mm,yshift=-2mm] {Max Leaf Number};
  		\node[mybox] (fes) [right = of vc] {\tworows{Feedback Edge Set}{Number}}  edge[myedge] (mln);
  		 \node[mybox] (meg) [right = of fes,fill=black!15] {$meg$} edge[myedge] (mln);
 		\node[mybox] (fvs) [below = of vc, yshift=5mm] {\tworows{Feedback Vertex Set}{Number}} edge[myedge] (vc) edge[myedge] (fes);
 		\node[mybox] (dem) [right = of fvs] {\tworows{Distance Edge-Monitoring}{Number}} edge[myedge] (fes) edge[myedge] (vc) edge[myedge] (meg);
 		 \node[mybox] (segs) [right = of dem] {\tworows{Strong Edge-Geodetic}{Set Number}} edge[myedge] (meg);
 		  \node[mybox] (egs) [below = of segs, yshift=5mm] {\tworows{Edge-Geodetic}{Set Number}} edge[myedge] (segs);
 		  \node[mybox] (gs) [below = of egs, yshift=5mm] {\tworows{Geodetic Set}{Number}} edge[myedge] (egs);
 		\node[mybox] (arb) [below = of fvs,xshift=20mm, yshift=5mm] {Arboricity} edge[myedge] (fvs) edge[myedge] (dem);

\end{tikzpicture}}
\caption{Relations between the parameter $meg$ and other structural parameters in graphs (with no isolated vertices). For the relationships of distance edge-monitoring sets, see~\cite{foucaud2022monitoring}. Edges between parameters indicate that the value of the bottom parameter is upper-bounded by a function of the top parameter. Figure reproduced from~\cite{foucaud2023monitoringfull}.}
\label{fig:diagram}
\end{figure}

Apart from the above, some well-known graph parameters too have relation with 
$meg(\cdot)$. 
See Figure~\ref{fig:diagram} (reproduced from~\cite{foucaud2023monitoringfull}) for a diagram showing the relations between these parameters, and others.

 \subsection{Our results, and organization of the paper} 
 
 \begin{itemize}
     \item In Section~\ref{sec gap}, we
     recall that 
     $g(G) \leq eg(G) \leq seg(G) \leq meg(G)$ for any connected non-trivial graph $G$~\cite{foucaud2022monitoring}.
     To complement our findings, we  construct examples of graphs 
     $G_{a,b,c,d}$ having $g(G) = a, eg(G) = b, seg(G) =c$, and $meg(G) = d$, where $a \leq b \leq c \leq d$ (and a few additional constraints). 
     We also show that $dem(G) < meg(G)$, and construct examples of graphs $G_{p,q}$ having $dem(G) = p$ and $meg(G) = q$, where $1 \leq p < q$.

     \item In Section~\ref{sec megfull} we prove a necessary and sufficient
     condition for a vertex to be part of every MEG-set of $G$. 
     As a corollary, we characterize graphs $G$ having $meg(G) = |V(G)|$, which answers an open question posed by Foucaud, Krishna and Ramasubramony Sulochana~\cite{foucaud2023monitoring}. It was worth recalling that this open question was addressed (not solved) by Haslegrave~\cite{haslegrave2023monitoring}. 
     These results can act as fundamental tools 
     in the study of MEG-set and related problems.

    \item In Section~\ref{sec:ext}, we will explore the impact of the tools built in Section~\ref{sec megfull} to expand the list of known graphs whose minimum MEG-set is the entire vertex set (defined as \emph{MEG-extremal} graphs in Section~\ref{sec:ext}). 
    To be precise, we completely characterize 
    $meg(G)$ when $G$ is 
    a cograph, 
    a block graph, 
    a well-partitioned chordal graph,
    or a proper interval graph, and observe that the 2-connected graphs from these families are MEG-extremal.   
    Moreover, if $G$ is an MEG-extremal graph, then the Cartesian and the strong products
    of $G$ with another graph is also MEG-extremal. The former was already known due to 
    Haslegrave~\cite{haslegrave2023monitoring},
    we however provide a shorter proof. 
    Furthermore, we also show that the tensor product of two MEG-extremal graphs is MEG-extremal.

    \item In Section~\ref{sec graph parameters}, we show that 
 $meg(G) \leq \frac{4|V(G)|}{g-3}$, where $g \geq 4$ denotes the girth of $G$. As a consequence, we show that any vertex cover is an MEG-set for a graph having girth at least $5$. Later we also prove a refinement of the main result of this section using the chromatic number of $G$.

     \item  In Section~\ref{sec operations}, we explore the effect of two fundamental graph operations, namely, the clique-sum and the subdivision on $meg(G)$. We show that $meg(G)$ is both lower and upper bounded by functions related to the operations and that the bounds are almost tight.

     \item  In Section \ref{sec conclusions}, we share our concluding remarks which also contain suggestions for future works in this direction. 
 \end{itemize}
 
 \medskip
  
\noindent \textbf{Note:} A preliminary version of this article was published in the proceedings of the CALDAM 2024 conference~\cite{FMMSST24}. 
This version contains an extended introduction, proofs missing from the conference paper, extensions of the results, additional results, and a corrected version of Theorem~\ref{thm:large_girth}. However, the algorithmic results from~\cite{FMMSST24} are not contained in the current article, but presented in a separate extended version~\cite{algo-MEG}.

  \section{Preliminary results}\label{sec:preliminary}
Right before we proceed with our contributions, let us recall a few results from~\cite{foucaud2023monitoring} since we think they will give a proper initial insight into the notion of MEG-set, and will also be used on several occasions throughout this work.

    A vertex is \textit{simplicial} if its neighborhood forms a clique.    A vertex $v$ of a graph $G$ is \textit{pendent vertex} if it is of degree~1.

\begin{lemma}[Lemma 2.1 of~\cite{foucaud2023monitoring}]\label{lemma:simpl}
    In a connected graph $G$ with at least one edge, any simplicial vertex belongs to any edge-geodetic set and thus, to any MEG-set of $G$.
\end{lemma}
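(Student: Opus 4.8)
The plan is to prove the statement first for edge-geodetic sets, and then deduce the claim for MEG-sets for free, since every MEG-set is in particular an edge-geodetic set: if $M$ monitors an edge $e$ via a pair $u,v\in M$, then $e$ lies on \emph{all} shortest $u$-$v$ paths, and as the endpoints of $e$ are connected there is at least one such path, so $e$ lies on some shortest path between two vertices of $M$. Hence it suffices to show that a simplicial vertex lies in every edge-geodetic set.

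So let $v$ be a simplicial vertex, let $S$ be an edge-geodetic set, and suppose for contradiction that $v\notin S$. We may assume $v$ has a neighbour (the relevant setting has no isolated vertices). Pick any edge $e=vw$ incident to $v$. Since $S$ is edge-geodetic, $e$ lies on some shortest path $P$ between two vertices $x,y\in S$. As $v\notin S$, we have $v\neq x$ and $v\neq y$, so $v$ is an internal vertex of $P$; consequently $P$ uses, besides $vw$, a second edge $vu$ at $v$ with $u\neq w$. Both $u$ and $w$ lie in $N(v)$, which by simpliciality is a clique, so $uw\in E(G)$. Replacing the length-$2$ subpath $u\text{-}v\text{-}w$ of $P$ by the edge $uw$ yields a strictly shorter $x$-$y$ walk, contradicting the fact that $P$ is a shortest path. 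Therefore $v\in S$.

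Combining the two observations, any simplicial vertex lies in every edge-geodetic set and hence in every MEG-set, as claimed. There is no real obstacle here; the only point requiring a moment's care is that the two neighbours of $v$ on $P$ are genuinely distinct (guaranteed because $v$ is internal on a path, not a closed walk), which is what makes the chord $uw$ available as a shortcut. One could also remark that the hypothesis "at least one edge" is used only to guarantee that a simplicial vertex under consideration is non-isolated in the intended reading.
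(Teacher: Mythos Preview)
The paper does not actually prove this lemma; it is quoted from~\cite{foucaud2023monitoring} as a preliminary result without proof, so there is nothing in the present paper to compare against. Your argument is the standard one and is correct: a simplicial vertex cannot be internal on any shortest path, since its two neighbours on the path are adjacent and provide a shortcut, and hence any edge incident to it forces it into every edge-geodetic set (and a fortiori every MEG-set). The only quibble is the phrase ``as the endpoints of $e$ are connected there is at least one such path'': what you really need is that $u$ and $v$ are connected so that a shortest $u$--$v$ path exists, which is guaranteed in the connected setting the paper works in throughout; your closing remark about the ``intended reading'' already acknowledges this kind of subtlety.
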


Lemma \ref{lemma:simpl} implies that all pendent vertices of a graph $G$ are part of any MEG-set of $G$.

    Two distinct vertices $u$ and $v$ are said to be \textit{open twins} if $N(u) = N(v)$ and \textit{closed twins} if $N[u] = N[v]$. If $u, v$ are either closed or open twins, then we simply call them \textit{twins}. 

\begin{lemma}[Lemma 2.2 of~\cite{foucaud2023monitoring}]\label{lemma:twins}
    If two vertices are twins of degree at least $1$ in a graph $G$, then they must belong to any MEG-set of $G$.
\end{lemma}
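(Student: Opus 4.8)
The plan is to argue by contradiction. Let $M$ be any MEG-set and suppose, without loss of generality, that $u \notin M$ (the argument for $v$ is symmetric). I will exhibit an edge incident to $u$ that $M$ fails to monitor. The guiding principle is that twins are interchangeable along shortest paths: any shortest path passing through $u$ can be rerouted through $v$ without changing its length, which destroys any claimed monitoring of an edge at $u$.

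First I would fix an edge $e = uw$ incident to $u$, which exists since $\deg(u) \ge 1$. If $u,v$ are open twins then $u \not\sim v$, so automatically $w \ne v$. If $u,v$ are closed twins and $\deg(u) = 1$, then $u$ is simplicial (indeed pendent), so Lemma~\ref{lemma:simpl} already forces $u \in M$, contradicting $u \notin M$; hence we may assume $\deg(u) \ge 2$ and choose $w \ne v$. In either case, $w \in N(u)$ together with $w \ne v$ and the twin condition ($N(u)=N(v)$, respectively $N[u]=N[v]$) gives $w \in N(v)$.

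Since $M$ monitors $e$, there are $x,y \in M$ such that $e$ lies on every shortest $x$-$y$ path, and $x,y \ne u$ because $u \notin M$. Fix a shortest $x$-$y$ path $P$. As $e \in E(P)$ but $u$ is not an endpoint of $P$, the vertex $u$ is internal to $P$, so the two edges of $P$ at $u$ are $uw$ and $uq$ for some $q \in N(u)$ with $q \ne w$. I claim $q \ne v$: for open twins this holds since $v \notin N(u)$; for closed twins, if $q = v$ then, using $w \sim v$ and $w \ne v$, replacing the subpath $w\,u\,v$ of $P$ by the edge $wv$ yields an $x$-$y$ walk of length $d(x,y)-1$, which is impossible. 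Hence $q \ne v$, and so $q \in N(v)$.

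Now build $P'$ from $P$ by replacing the vertex $u$ with $v$, that is, replacing the edges $uw, uq$ by $vw, vq$. This is an $x$-$y$ walk of length $d(x,y)$; if it repeated a vertex it could be shortcut to a walk of length strictly less than $d(x,y)$, which is absurd, so $P'$ is in fact a shortest $x$-$y$ path. Since $u \ne v$, the edge $e = uw$ is not among the edges of $P'$, contradicting that $e$ lies on every shortest $x$-$y$ path. Thus $u \in M$, and exchanging the roles of $u$ and $v$ gives $v \in M$ as well. The only delicate point is the closed-twin case in which $P$ might traverse the edge $uv$ itself; this is exactly what the length-drop argument (together with the appeal to Lemma~\ref{lemma:simpl} when $\deg(u)=1$) takes care of, and once $w,q \in N(v)$ is secured the rerouting is routine.
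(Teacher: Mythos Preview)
Your proof is correct. The paper does not actually supply its own proof of this lemma---it is quoted from~\cite{foucaud2023monitoring} as a preliminary result---so there is nothing to compare against within the present paper. Your argument is the standard one: reroute a shortest path through the twin $v$ to avoid the edge at $u$, with the small case analysis (open vs.\ closed twins, and the degree-$1$ closed-twin boundary case handled via Lemma~\ref{lemma:simpl}) taking care of the only subtleties.
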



\section{Relation between network monitoring parameters}\label{sec gap}
In this section, we find examples of graphs having prescribed values of network monitoring parameters.

\subsection{Relation with geodetic parameters}
We start with proving an if and only if condition for which a strong edge-geodetic set is also an MEG-set.

\begin{proposition}
    Let $M \subseteq V(G)$ be a vertex subset of a graph $G$ and let $f$ be an assignment of a shortest path 
    to each pair of vertices of $M$.  Then $M$ is an MEG-set if and only if $M$, along with the assignment $f$, is a strong edge-geodetic set for any choice of $f$. 
\end{proposition}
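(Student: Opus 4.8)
The plan is to prove both implications directly from the definitions, exploiting the fact that the path assigned by $f$ to a pair $\{u,v\}$ is, by the definition of a strong edge-geodetic set, a shortest $u$-$v$ path.

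For the forward direction, assume $M$ is an MEG-set and fix an arbitrary assignment $f$. Given any edge $e \in E(G)$, the MEG-property provides a pair $u,v \in M$ such that $e$ lies on \emph{every} shortest $u$-$v$ path; in particular $e$ lies on $f(\{u,v\})$, which is one such shortest path. Hence every edge of $G$ is covered by an assigned path, so $(M,f)$ is a strong edge-geodetic set, and since $f$ was arbitrary this holds for any choice of $f$.

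For the backward direction I would argue by contraposition. Suppose $M$ is not an MEG-set; then there is an edge $e$ such that for every pair $u,v \in M$ there exists at least one shortest $u$-$v$ path $Q_{uv}$ avoiding $e$. Since the choices for distinct pairs of $M$ are independent of one another, setting $f(\{u,v\}) := Q_{uv}$ for each pair defines a legitimate assignment, and for this particular $f$ the edge $e$ lies on no assigned path. Thus $(M,f)$ fails to be a strong edge-geodetic set, so the right-hand statement fails as well.

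The only point requiring care is the bookkeeping of quantifiers: the right-hand side asserts that $(M,f)$ is a strong edge-geodetic set for \emph{all} $f$, so its negation is the \emph{existence} of a bad $f$, which is exactly what the contrapositive construction produces; one should also note explicitly that a per-pair choice of a shortest path avoiding $e$ can be made simultaneously over all pairs precisely because there is no interaction between the pairs. Beyond this, no genuine obstacle arises.
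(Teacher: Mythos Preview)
Your proof is correct and follows essentially the same route as the paper's: both handle the forward implication by noting that the witnessing pair for an edge in the MEG-sense works for any assigned shortest path, and both handle the converse by constructing, for an unmonitored edge $e$, a single assignment that chooses an $e$-avoiding shortest path for every pair. Your contrapositive framing and the paper's contradiction framing are cosmetically different but the underlying argument is identical.
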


\begin{proof}
  Let $M$ be an MEG-set of $G$. 
  As all the edges of $G$ are monitored by $M$, 
  for each edge $e$ there exists a pair of 
  vertices $u,v \in M$ such that $e$ lies on every 
  shortest path between $u$ and $v$. 
  Therefore, if we arbitrarily assign any shortest path to each pairs of vertices of $M$, the set $M$ along with this so-obtained assignment of shortest paths 
  is a strong edge-geodetic set. 

  On the other hand, let $M$, along with the assignment $f$, be a strong edge-geodetic set for any choice of $f$. Now if some edge $e$ of $G$ is not monitored, then it is possible to find an assignment of  a path $P_{uv}$ to each pair $u,v$ of vertices of $M$ 
  such that $e$ is not contained in $P_{uv}$ for any $u,v$. This is a contradiction to our assumption. Thus, $M$ must be an MEG-set. 
\end{proof}

It is known~\cite{foucaud2023monitoring} (see also Figure~\ref{fig:diagram}) that the following relation holds among the geodetic parameters: 
$$g(G) \leq eg(G) \leq seg(G) \leq meg(G).$$
Moreover, we know that for any connected graph $G \neq K_1$, $g(G) \geq 2$. 
Therefore,  it is natural to ask the question, given four positive integers $a,b,c,d$  satisfying 
 $2 \leq a \leq b \leq c \leq d$, is there a graph $G_{a,b,c,d}$ such that we have $g(G_{a,b,c,d}) = a$,  $eg(G_{a,b,c,d}) = b$,  $seg(G_{a,b,c,d}) = c$, and $meg(G_{a,b,c,d}) = d$? The following remark captures some basic version of this answer. 
 Later, in Theorem~\ref{thm Gabcd}  we provide a positive answer to this question except for some specific cases.

\begin{remark}
Notice that, for any complete graph $K_n$ on $n\geq 2$ vertices, the values of all the parameters are equal to $n$. That is, equality holds in all the inequalities of the above chain of inequations. 

On the other hand, Figure~\ref{fig example reln} gives an example of a graph where all the inequalities of the above chain of inequations are strict. To be specific, in this particular example, the values of the parameters increase exactly by one in each step.

\end{remark}

    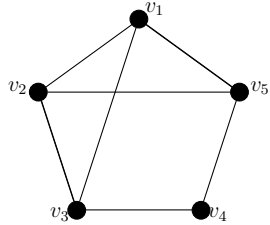
\begin{figure}[h]
\centering
    \begin{tikzpicture}[scale=0.7, transform shape]
    \node[minimum size=4cm,draw,regular polygon,regular polygon sides=5] (a) {};
\foreach \i in {1,...,5}
    \node[circle, draw,inner sep=0.1mm,minimum size=3mm, fill] (a\i) at (a.corner \i) {$i$};
   
  \foreach \x in {1,2}
        \foreach \y in {3,5}
\draw (a\y) -- (a\x);
\node at (-2.3,0.7) {$v_2$};
\node at (-1.5,-1.7) {$v_3$};
\node at (1.5,-1.7) {$v_4$};
\node at (2.3,0.7) {$v_5$};
\node at (0.3,2.2) {$v_1$};
\end{tikzpicture}

\caption{A graph $G$ with $2 = g(G) < 3 = eg(G) < 4= seg(G) < 5 = meg(G).$ 
Note that, a minimum geodetic set of $G$ is $\{v_3,v_5\}$, a minimum edge-geodetic set of $G$  is $\{v_1, v_2, v_4\}$, a minimum strong edge-geodetic set of $G$ is 
$\{v_1,v_2,v_3,v_4\}$ (the assigned  shortest paths between a pair of adjacent vertices 
is the edge between them, and between a pair of non-adjacent vertices is the $2$-path through $v_5$)
and a minimum MEG-set of $G$ is $\{v_1,v_2,v_3,v_4,v_5\}$.}
\label{fig example reln}
\end{figure}

 \begin{theorem}\label{thm Gabcd}
    For any positive integers $4 \leq a \leq b \leq c \leq d$, except for $d = c + 1$ when $b = a+1$ and for $d = c, d = c+2$ when $b \neq a+1$, there exists a connected graph $G_{a,b,c,d}$ with $g(G_{a,b,c,d}) = a, eg(G_{a,b,c,d}) = b, seg(G_{a,b,c,d}) = c$, and $meg(G_{a,b,c,d}) = d$. 
\end{theorem}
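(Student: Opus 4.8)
The plan is to construct $G_{a,b,c,d}$ by combining separate "gadgets," each responsible for pushing one of the four parameters up to its target while not disturbing the others. The starting point should be the observation that adding a pendant vertex (or, more generally, a simplicial vertex) forces that vertex into every edge-geodetic set, and hence into every strong edge-geodetic set and every MEG-set (Lemma~\ref{lemma:simpl}), but a pendant vertex contributes to $g(G)$ only if it is itself not covered by shortest paths among the other chosen vertices. So the rough idea is: build a "core" graph $H$ realizing the geodetic number $a$, then attach $b-a$ pendant-type vertices that are forced into $eg$ but that lie on shortest paths between core vertices (so they do not raise $g$), then attach a gadget that raises $seg$ above $eg$ by $c-b$ (this is the delicate one, since the gap between $eg$ and $seg$ is about forcing \emph{all} shortest representative paths to miss some edge), and finally attach a gadget raising $meg$ above $seg$ by $d-c$.

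\medskip

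\textbf{Key steps, in order.} First I would isolate a flexible family of graphs $H_a$ with $g(H_a)=eg(H_a)=seg(H_a)=meg(H_a)=a$ — complete graphs $K_a$ are the obvious candidate by the Remark, but one may instead want a graph with a large "diameter part" to have room to attach gadgets; I would use $K_a$ with one distinguished edge or vertex as the attachment point, checking that attaching things there keeps $g=a$. Second, for the $eg$-versus-$g$ gap, I would subdivide, or add parallel "detour" paths, so that certain edges are only covered by shortest paths whose endpoints are newly-forced (simplicial or twin) vertices: $b-a$ such gadgets. Third, for the $seg$-versus-$eg$ gap, I would use the classical trick from the strong edge-geodetic literature — a bundle of internally disjoint shortest paths of equal length between two hubs, so that covering all their edges with \emph{one assigned path per pair} is impossible without extra vertices, even though covering them in the weaker edge-geodetic sense is cheap; I would size this to cost exactly $c-b$. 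Fourth, for the $meg$-versus-$seg$ gap, I would attach a gadget containing an edge on which every shortest path between two fixed already-present vertices is non-unique (two equal-length detours), so that no single pair monitors it, forcing the endpoints of a shorter substructure into $M$; replicate to get $d-c$. Fifth, I would verify \emph{additivity}: the gadgets must be attached so that forced vertices of one gadget do not accidentally monitor/cover the critical edges of another, and so that shortest paths don't shortcut through the core in unintended ways — typically ensured by making all attachment "necks" into cut vertices or by inflating distances. Finally, I would handle the excluded case $d=c+1$ by an explicit parity/counting argument showing why a single extra vertex cannot separate $seg$ from $meg$ in these constructions (this is presumably where the exception genuinely comes from: a gadget that raises $meg$ by exactly $1$ over $seg$ tends to raise $seg$ too, or is impossible for a structural reason).

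\medskip

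\textbf{Main obstacle.} The hard part will be the $seg$-versus-$eg$ separation together with proving the lower bounds cleanly. Upper bounds (exhibiting a set of the claimed size) are mechanical, but lower bounds require arguing that \emph{no} smaller set works, and for $seg$ this means ruling out all assignments of representative shortest paths — a quantifier-heavy statement. I would handle it by engineering the gadget so that a small number of edges are "bottlenecks" each lying on exactly two shortest paths between a fixed pair of hubs, and then a pigeonhole/counting argument shows that $k$ hub-like vertices yield only $\binom{k}{2}$ assigned paths, which cannot cover $\Omega(k^2)$-but-carefully-tuned-to-the-gap many bottleneck edges unless $k$ is large enough; making the arithmetic land exactly on $c$ (and not $c\pm1$) while keeping $g$, $eg$, $meg$ pinned is the fiddly core of the proof. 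I would also need to double-check the interaction with the already-established chain $g\le eg\le seg\le meg$ so that the lower bound for, say, $meg$ is not automatically forced above $d$ by the other gadgets, which is exactly the additivity bookkeeping mentioned above.
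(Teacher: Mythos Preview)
Your proposal is a plan, not a proof: none of the gadgets is actually specified, and the lower-bound verifications that you yourself flag as ``the hard part'' are never carried out. Beyond that, there is at least one internal inconsistency in the outline. For the $eg$-versus-$g$ gap you propose to add ``pendant-type vertices that are forced into $eg$ but that lie on shortest paths between core vertices (so they do not raise $g$).'' A pendant vertex has degree~$1$ and therefore cannot lie on any shortest path between two other vertices; so such a vertex \emph{would} raise $g$. The paper resolves this gap with a genuinely different device: it attaches a clique $W$ of size $b-a$ whose vertices are all adjacent to two hubs $y$ and $x_1$. The vertices of $W$ are then covered (as vertices) by shortest paths between pendants hanging off $y$ and the far end of the $x_1$-path, so $g$ is unaffected, while the \emph{edges inside} $W$ force every $w_i$ into any edge-geodetic set.

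On the other pieces: your ``bundle of internally disjoint shortest paths between two hubs'' for the $seg$-versus-$eg$ gap is exactly what the paper uses (the $c-b+1$ subdivided parallel edges between $z_1$ and $z_2$), so that intuition is right, though the paper's counting argument for the lower bound is more direct than the $\binom{k}{2}$ pigeonhole you sketch. For the $meg$-versus-$seg$ gap the paper does not use a ``non-unique shortest path'' gadget; it places pairs of \emph{false twins} $x_{3i},x'_{3i}$ along a long path, invoking Lemma~\ref{lemma:twins} to force them into every MEG-set while they contribute nothing to $seg$. This is also precisely why $d=c+1$ is excluded in the paper's construction: twins come at least two at a time, so the twin gadget cannot add exactly one forced vertex. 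Your speculative explanation (``tends to raise $seg$ too'') is not the actual mechanism. Finally, starting from $K_a$ is riskier than you suggest: as soon as you attach a gadget to a vertex of $K_a$, that vertex ceases to be simplicial and you lose the automatic lower bound $meg\ge a$; the paper sidesteps this by realising $g=a$ via $a-1$ pendants (always simplicial) plus one extra vertex, which is much more robust under attachment.
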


\begin{proof}
We begin the proof by describing the construction of $G_{a,b,c,d}$.

\medskip

\noindent \textit{Construction of $G_{a,b,c,d}$:} 
In the first phase of the construction, we start with a $K_{2, 2 + b-a}$, where the partite set of size two has the vertices $x_1$ and $y$, 
and the partite set 
of size $(2+b-a)$ has the vertices $z_1,z_2$ and $w_1, w_2, \ldots, w_{b-a}$. 
Moreover, we add some edges 
in such a way that the set 
$$W = \{w_1, w_2, \ldots, w_{b-a}\}$$ 
becomes a clique. 
We also add the edge 
$z_2w_1$ only if $b=a+1$. 

In the second phase of the construction, 
we add $(c-b+1)$ parallel edges between the vertices 
$z_1$ and $z_2$. After that we subdivide (once) each of the above-mentioned parallel edges and name the degree two vertices created due to the subdivisions as $v_1, v_2, \ldots, v_{c-b+1}$. The set of vertices created by the subdivisions is given by 
$$V=\{v_1, v_2, \ldots, v_{c-b+1}\}.$$

 \begin{figure}
\centering
     \begin{tikzpicture}[scale=0.7, transform shape]
     \node at (-1.4,1) {\large$u_3$};
  \node at (-1.4,1.5) {\large$u_2$};
  \node at (-1.4,2) {\large$u_1$};
  \node at (0.5,0) {\large$y$};
  \node at (1.5,-0.1) {\large$v_1$};
  \draw[dotted, very thick] (3.6,-0.5)--(4.5,-0.5); 
    \node at (4.75,-0.2) {\scriptsize $ v_{c-b+1}$};
     \node at (5.8,-1) {\large$x_1$};
      \node at (6.6,-1) {\large$x_2$};
       \node at (7.6,-1) {\large$x_3$};
        \node at (8.6,-1) {\large$x_4$};
         \node at (9.6,-1) {\large$x_5$};
          \node at (10.6,-1) {\large$x_6$};
          \node at (11.6,-1) {\large$x_7$};
        \node at (7.6,0.4) {\large$x'_3$};
        \node at (7.6,2.4) {\large$x''_3$};
         \node at (10.6,1.5) {\large$x'_6$};
          \node at (14.6,1.5) {\large$x'_{r-1}$};
           \node at (14.6,-1) {\large$x_{r-1}$};
            \node at (13.6,-1) {\large$x_{r-2}$};
           \node at (15.6,-1) {\large$x_r$};
  \node at (-1.6,-2) {\large$u_{a-3}$};
 \node at (3.4,2) {\large$z_1$}; 
  \node at (3.4,-2) {\large$z_2$}; 
  \node at (3.6,4) {\large$u_{a-2}$}; 
  \node at (16.5,-1) {\large$u_{a-1}$}; 
  \node at (3.4,-2.7) {\large$w_1$}; 
  \node at (2.7,-3.5) {\large$w_2$}; 
  \node at (3,-7.5) {\large$w_{b-a}$}; 
 \node[circle, draw,inner sep=0.1mm,minimum size=3mm, fill] (u_3) at (-1,1) {};
   \node[circle, draw,inner sep=0.1mm,minimum size=3mm, fill] (u_2) at (-1,1.5) {};
  \node[circle, draw,inner sep=0.1mm,minimum size=3mm, fill] (u_1) at (-1,2) {};
  \node[circle, draw,inner sep=0.1mm,minimum size=3mm,white] (u_4) at (-1,0.5) {};
   \node[circle, draw,inner sep=0.1mm,minimum size=3mm,white] (u_5) at (-1,-1.5) {};
  \node[circle, draw,inner sep=0.1mm,minimum size=3mm, fill] (u_{a-1}) at (-1,-2) {};
 
  \node[circle, draw,inner sep=0.1mm,minimum size=3mm, fill] (a) at (0.5,-0.5) {};
  \node[circle, draw,inner sep=0.1mm,minimum size=3mm, fill] (a_1) at (1.5,-0.5) {};
  \node[circle, draw,inner sep=0.1mm,minimum size=3mm, fill] (a_2) at (2.5,-0.5) {};
  \node[circle, draw,inner sep=0.1mm,minimum size=3mm, fill] (a_3) at (3.5,-0.5) {};
  \node[circle, draw,inner sep=0.1mm,minimum size=3mm, fill] (a_4) at (4.5,-0.5) {};
  \node[circle, draw,inner sep=0.1mm,minimum size=3mm, fill] (a_5) at (5.5,-0.5) {};
  \node[circle, draw,inner sep=0.1mm,minimum size=3mm, fill] (a_6) at (6.5,-0.5) {};
  \node[circle, draw,inner sep=0.1mm,minimum size=3mm, fill] (a_7) at (7.5,-0.5) {};
  \node[circle, draw,inner sep=0.1mm,minimum size=3mm, fill] (a_8) at (8.5,-0.5) {};
  \node[circle, draw,inner sep=0.1mm,minimum size=3mm, fill] (a_9) at (9.5,-0.5) {};
  \node[circle, draw,inner sep=0.1mm,minimum size=3mm, fill] (a_10) at (10.5,-0.5) {}; 
  \node[circle, draw,inner sep=0.1mm,minimum size=3mm, fill] (a_11) at (11.5,-0.5) {}; 
  \node[circle, draw,inner sep=0.1mm,minimum size=3mm, fill] (a_14) at (13.5,-0.5) {}; 
  \node[circle, draw,inner sep=0.1mm,minimum size=3mm, fill] (a_15) at (14.5,-0.5) {}; 
  \node[circle, draw,inner sep=0.1mm,minimum size=3mm, white] (a_12) at (12,-0.5) {}; 
  \node[circle, draw,inner sep=0.1mm,minimum size=3mm, white] (a_13) at (13,-0.5) {};

  \draw[dotted, very thick] (u_4)--(u_5); 
 \draw[thick] (u_3)--(a); 
 \draw[thick] (a)--(u_{a-1}); 
 \draw[thick] (a_5)--(a_11); 
 \draw[dotted,very thick] (a_12)--(a_13); 
 \draw[thick] (a)--(u_1); 
 \draw[thick] (u_2)--(a); 
 \draw[thick] (a)--(u_{a-1}); 
 \draw[thick] (a_5)--(a_11); 
 \draw[thick] (a_14)--(a_15); 
  \node[circle, draw,inner sep=0.1mm,minimum size=3mm, fill] (z_1) at (3,2) {}; 
  \node[circle, draw,inner sep=0.1mm,minimum size=3mm, fill] (z_2) at (3,-2) {}; 
  \node[circle, draw,inner sep=0.1mm,minimum size=3mm, fill] (z_3) at (3,4) {}; 
  \node[circle, draw,inner sep=0.1mm,minimum size=3mm, fill] (z_4) at (7.5,2) {};
  \node[circle, draw,inner sep=0.1mm,minimum size=3mm, fill] (z_5) at (7.5,1) {}; 
  \node[circle, draw,inner sep=0.1mm,minimum size=3mm, fill] (z_7) at (10.5,1) {}; 

  \node[circle, draw,inner sep=0.1mm,minimum size=3mm, fill] (a_16) at (15.5,-0.5) {};
  \node[circle, draw,inner sep=0.1mm,minimum size=3mm, fill] (a_17) at (16.5,-0.5) {};
  \node[circle, draw,inner sep=0.1mm,minimum size=3mm, fill] (z_8) at (14.5,1) {};
 \draw[thick] (a_14)--(z_8);
 \draw[thick] (a_16)--(z_8);
 \draw[thick] (a_14)--(a_17);
 \draw[thick] (z_1)--(a);
 \draw[thick] (z_1)--(a_4);
 \draw[thick] (z_2)--(a);
 \draw[thick] (z_2)--(a_5);
 \draw[thick] (z_1)--(a_5);
 \draw[thick] (z_2)--(a_4);
 \draw[thick] (z_1)--(a_1);
 \draw[thick] (z_1)--(a_2);
 \draw[thick] (z_1)--(a_3);
 \draw[thick] (z_2)--(a_2);
 \draw[thick] (z_2)--(a_1);
 \draw[thick] (z_2)--(a_3);
 \draw[thick] (z_1)--(z_3);
 \draw[thick] (a_6)--(z_4);
 \draw[thick] (a_6)--(z_5);
 \draw[thick] (a_8)--(z_5);
 \draw[thick] (a_8)--(z_4);
 \draw[thick] (a_9)--(z_7);
 \draw[thick] (a_11)--(z_7);
 
 \node[circle, draw,inner sep=0.1mm,minimum size=3mm, fill] (c_1) at (3,-3) {}; 
  \node[circle, draw,inner sep=0.1mm,minimum size=3mm, fill] (c_2) at (3,-4) {}; 
  \node[circle, draw,inner sep=0.1mm,minimum size=3mm, fill] (c_3) at (3,-5) {}; 
   \node[circle, draw,inner sep=0.1mm,minimum size=3mm, fill] (c_4) at (3,-6) {}; 
  \node[circle, draw,inner sep=0.1mm,minimum size=3mm, fill] (c_5) at (3,-7) {}; 
 \draw[thick] (c_1)--(c_2);
 \draw[thick] (c_2)--(c_3);
  \draw[thick, dashed,red] (z_2)--(c_1);
   \draw[dotted,thick] (c_3)--(c_4);
    \draw[thick] (c_4)--(c_5);
 \path[thick] (a) edge [bend right=35,looseness=1] (c_1);
 \path[thick] (a_5) edge [bend left=35,looseness=1] (c_1);
 \path[thick] (a) edge [bend right=30,looseness=1] (c_2);
 \path[thick] (a_5) edge [bend left=30,looseness=1] (c_2);
 \path[thick] (a) edge [bend right=30,looseness=1] (c_3);
 \path[thick] (a_5) edge [bend left=30,looseness=1] (c_3);
\path[thick] (c_1) edge [bend left=30,looseness=1] (c_3);

 \path[thick] (a_5) edge [bend left=30,looseness=1] (c_4);
\path[thick] (c_1) edge [bend left=30,looseness=1] (c_4);
 \path[thick] (a_5) edge [bend left=30,looseness=1] (c_5);
\path[thick] (c_1) edge [bend left=30,looseness=1] (c_5);
\path[thick] (a) edge [bend right=35,looseness=1] (c_4);
\path[thick] (a) edge [bend right=35,looseness=1] (c_5);

\path[thick] (c_2) edge [bend right=35,looseness=1] (c_4);
\path[thick] (c_2) edge [bend right=35,looseness=1] (c_5);
\end{tikzpicture}
\caption{The structure of $G_{a,b,c,d}.$ The red dashed edge between $z_2$ and $w_1$ represents an edge/non-edge depending on the values of $a$ and $b$.}
\label{fig Gabcd first}
\end{figure}
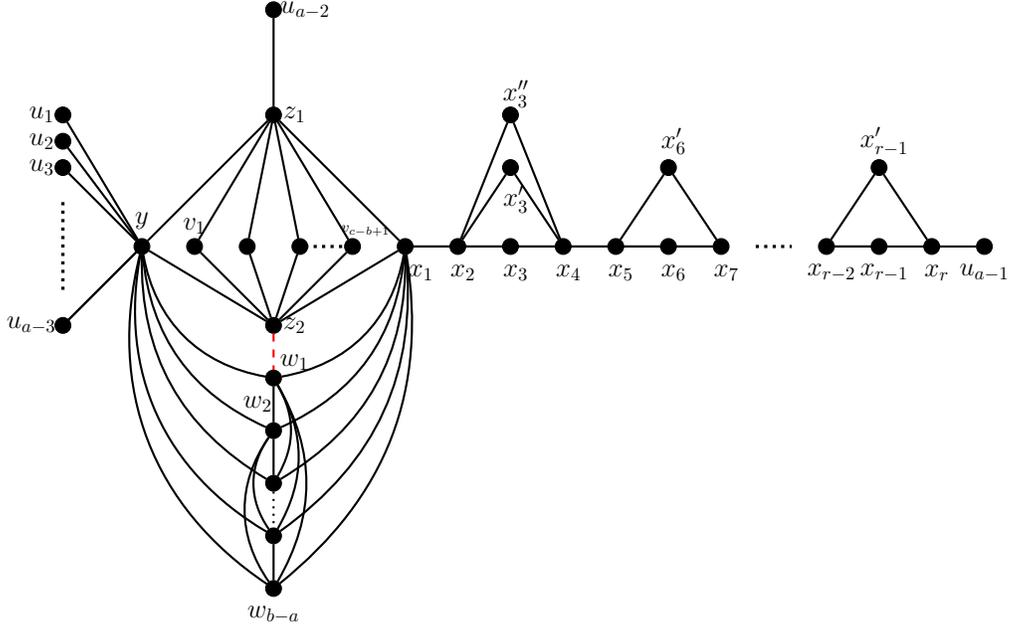

In the third phase of the construction, we add $(a-3)$ pendent neighbors $u_1, u_2, \ldots, u_{a-3}$ to $y$, and one pendent neighbor $u_{a-2}$ to $z_1$. 
Moreover, we attach a long path $x_1x_2 \ldots x_r u_{a-1}$
with the vertex $x_1$, where $u_{a-1}$ is a pendent vertex and  $r = 3\left\lfloor \frac{d-c}{2} \right\rfloor +1$. 
Next we will add a open twin $x'_{3i}$ to the vertices 
of the form $x_{3i}$ for all $i \in \left\{1, 2, \cdots, \left\lfloor \frac{d-c}{2} \right\rfloor\right\}$. Additionally, if $(d-c)$ is odd, then we will add another twin $x''_{3}$ to the vertex $x_3$. 
For convenience, let
$$U = \{u_1, u_2, \ldots, u_{a-1}\}$$ 
 denote the set of all pendent vertices and 
$$X = \left\{x_{3i}, x'_{3i} ~|~ i = 1, 2, \ldots,\left\lfloor \frac{d-c}{2} \right\rfloor \right\} \cup \{x''_3 \}$$ 
denote the set of all degree $2$ vertices on the paths connecting $x_1$ and $u_{a-1}$.

Note that, $x''_3$ exists in $X$ if and only if $(d-c)$ is odd. Similarly, when $b \neq a+1$, we maintain the construction same as above except for a small change where we take $r = 3\lfloor\frac{d-c-1}{2}\rfloor + 1$, and $x''_3$ exists in $X$ if and only if $(d-c)$ is even. This completes the description of the construction of the graph $G_{a,b,c,d}$ (see Fig.~\ref{fig Gabcd first} for a pictorial reference). 

As $U$ is the set of all pendents, we know that it will be 
part of any geodetic set, edge-geodetic set, strong edge-geodetic set, and monitoring edge-geodetic set by Lemma~\ref{lemma:simpl}. However, the vertices of $U$ cannot cover the vertices of $V$ using shortest paths between the vertices of $U$. Therefore, we need at least one more vertex to form a geodetic set. As $U \cup \{z_2\}$ is a geodetic set of $G$, we can infer that 
$$g(G_{a,b,c,d}) = |U| + |\{z_2\}| = (a-1)+1=a.$$

Next, observe that the vertices of $U$ are not able to 
cover any edge of the clique $W$ using shortest paths between the vertices of $U$. Moreover,  the only way to monitor those edges is by taking $W$ in our edge-geodetic set. Similarly none of the edges of the form $v_iz_j$ are monitored with only vertices from $U$ and $W$. On the other hand, $U \cup W \cup \{z_2\}$ is an edge-geodetic set when $b \geq a+2$. Note that when $b=a+1$, $z_2$ and $w_1$ are adjacent. In this case too, we need the vertices from the set $U$ and the vertex $z_2$ to monitor all the edges except $z_2w_1$. Thus, we take $W = \{w_1\}$ in our edge-geodetic set.
 Therefore, 
$$eg(G_{a,b,c,d}) = |U| + |W| + |\{z_2\}| = (a-1)+(b-a)+1=b.$$

We know that the vertices of $U$ are in any strong edge-geodetic set. Moreover, note that, the vertices of $W$ must be in any strong edge-geodetic set to cover the edges of the clique $W$ if $b\neq a+1$, or to monitor $w_1z_2$ if $b=a+1$. 
Now, let us see how we can cover the edges of the $(c-b+1)$ $2$-paths between $z_1, z_2$, having the vertices of $V$ as their internal vertex. We can assign one shortest path involving one of $v_i'$s between $u_{a-2}$ and $z_2$, but for the remaining edges, we must take $c-b$ vertices in our strong edge-geodetic set to cover those incident edges. 
Moreover, 
the set $U \cup W \cup (V \setminus \{v_1\}) \cup \{z_2\}$ is indeed a strong edge-geodetic set, as every other edge can be monitored by assigning a shortest path between $u_1$ to $u_{a-1}$ using all $x_{3i}$, a shortest path from $u_{a-2}$ to $u_{a-1}$ using all $x'_{3i}$ and a shortest path from $u_{a-2}$ to $u_{a-1}$ using $x''_{3i}$ (if it exists). Thus, 
$$seg(G_{a,b,c,d})=|U|+|W|+|V-1|+ |\{z_2\}|=(a-1)+(b-a)+(c-b)+1=c.$$

Finally, for any MEG-set, we have to take the vertices of 
$U$ (as they are pendents), the vertices of $W$ (to monitor the edges of the clique $W$), the vertices of $X$ (as they are twins, see Lemma~\ref{lemma:twins}). However, even with these vertices, we cannot monitor the edges of the 
$(c-b+1)$ $2$-paths between $z_1, z_2$, having the vertices of $V$ as their internal vertex. To do so, we have to take all vertices of $V$ in our MEG-set. These vertices also help monitor the edge $z_2w_1$. Hence, the set 
$U \cup W \cup V \cup X$ is an MEG-set when $b = a+1$,  Hence, 
$$meg(G_{a,b,c,d})=|U|+|W|+|V|+|X|=(a-1)+1+(c-b+1)+2\left\lfloor \frac{d-c}{2} \right\rfloor + \epsilon = d,$$
where $\epsilon = 0$ (resp., $1$) if $(d-c)$ is (even (resp., odd), and $d \neq c+1$. 
Observe that, when we have $b \neq a+1$, that is, when $b=a$ or when $b \geq a+2$, we do not have the edge $z_2w_1$, hence, we have to take $z_2$ in our MEG-set to monitor the edges of the type $v_iz_j$. In this case, the set $U \cup V \cup W \cup X \cup \{z_2\}$ is an MEG-set. Hence, 
\begin{align*}
    meg(G_{a,b,c,d})&=|U|+|W|+|V|+|X|+|\{z_2\}|\\ &=(a-1)+(b-a)+(c-b+1)+2\left\lfloor \frac{d-c-1}{2} \right\rfloor + \epsilon + 1 \\ &= d,
\end{align*}

where $\epsilon = 0$ (resp., $1$) if $(d-c)$ is (odd (resp., even), and $d \neq c, c+2$. 

This completes the proof.
\end{proof}

\subsection{Relation with distance-edge monitoring sets}
Observe that the concept of MEG-set is closely related to that of distance-edge monitoring set. Notice that, MEG-sets are particular types of 
distance-edge monitoring sets (see the definition of distance-edge monitoring set in Subsection~\ref{subsec:defs}), and hence we have
$dem(G) \leq meg(G)$. However, we show that this inequality is strict.  

\begin{lemma}
    For any graph $G$ having at least one edge, we have $dem(G) < meg(G)$.
\end{lemma}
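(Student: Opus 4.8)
The plan is to show that some MEG-set $M$ of $G$ is never a minimum distance-edge monitoring set, i.e., that one can always remove at least one vertex from a minimum MEG-set and still monitor every edge in the weaker $dem$-sense. The key observation is that in the definition of $dem$, for each edge $e$ we only need \emph{one} probe $x \in S$ together with \emph{some} vertex $y \in V(G)$ (not necessarily in $S$) such that $e$ lies on all shortest $x$-$y$ paths; whereas an MEG-set needs \emph{both} endpoints of the monitoring pair to lie in $M$. So I would take a minimum MEG-set $M$ with $|M| = meg(G)$, pick a well-chosen vertex $w \in M$, and argue that $M \setminus \{w\}$ is still a distance-edge monitoring set, which gives $dem(G) \le |M| - 1 = meg(G) - 1 < meg(G)$.

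First I would choose $w$ to be a vertex of $M$ that is \emph{not} simplicial and not forced by the twin condition — more concretely, I would try to pick $w$ maximizing (or simply being non-minimal in) its distance to some other vertex, e.g. a vertex realizing the diameter or an endpoint of a diametral path, since such a vertex is "easy to replace as a target". Then for every edge $e$ previously monitored by a pair $\{u,w\}$ with $u \in M$, I claim the pair $(u, w)$ still works in the $dem$-sense: $u \in M \setminus \{w\} \subseteq S$ and $w \in V(G)$ serves as the (arbitrary) second vertex $y$, and $e$ still lies on all shortest $u$-$w$ paths. Edges monitored by pairs $\{u,v\}$ with $u,v \in M \setminus \{w\}$ are of course still monitored. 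Hence every edge of $G$ is monitored by $M \setminus \{w\}$ in the distance-edge-monitoring sense, so $M\setminus\{w\}$ is a distance-edge monitoring set of size $meg(G) - 1$.

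Actually, re-examining this, the argument is cleaner than I first thought and needs no special choice of $w$ at all: \emph{any} $w \in M$ works, because the asymmetry between "$x \in S$, $y \in V(G)$" for $dem$ versus "$u,v \in M$" for MEG-sets means we can always "demote" one vertex of the MEG-set from being a probe to being merely a target. The only thing to check is that removing $w$ does not decrease $|M|$ to zero and that the resulting set is still nonempty and valid — this is where the hypothesis that $G$ has at least one edge enters: an MEG-set of a graph with an edge has size at least $2$ (indeed at least $2$ by the chain $g(G) \le meg(G)$ and $g(G)\ge 2$ for a nontrivial connected component, or directly), so $M \setminus \{w\}$ is nonempty, and more importantly the edge incident structure guarantees a monitoring pair for each edge always has a surviving probe. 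I would state this carefully: for each edge $e$, fix a monitoring pair $\{u_e, v_e\} \subseteq M$; at most one of $u_e, v_e$ equals $w$; keep the other as the probe and use $w$ (or the removed one) as the free target vertex $y$.

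The main obstacle — really the only subtlety — is the corner case analysis ensuring $M \setminus \{w\}$ is nonempty and that each edge genuinely retains a valid $(x,y)$ pair with $x \in M\setminus\{w\}$; in particular one must be slightly careful if $G$ is disconnected or has isolated vertices, but the hypothesis "$G$ has at least one edge" combined with the standing assumption that we may restrict attention to the relevant component handles this. I would also double-check that the inequality is \emph{strict} and not merely $\le$: since we exhibit a distance-edge monitoring set of size exactly $meg(G)-1$, strictness is immediate. So the proof is short: take a minimum MEG-set, delete any vertex, verify it remains a distance-edge monitoring set via the probe/target asymmetry, and conclude $dem(G) \le meg(G) - 1 < meg(G)$.
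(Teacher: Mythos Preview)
Your proposal is correct and, once you discard the initial detour about choosing a special $w$, it is exactly the paper's argument: take a minimum MEG-set $M$, remove any vertex, and observe that every edge keeps at least one of its two monitoring endpoints in $M\setminus\{w\}$, which suffices for the $dem$ condition since the second vertex may be arbitrary. The paper's proof is a two-line version of your final paragraph.
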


\begin{proof}
    Let $M$ be a minimum MEG-set of $G$. 
    We claim that $M \setminus \{x\}$ for any $x \in M$ is a distance-edge monitoring set of $G$. This is correct as any edge of $G$ is monitored by two vertices of $M$, and at least one of them must belong to $M \setminus \{x\}$. Thus $dem(G) \leq |M| - 1 < meg(G).$ 
\end{proof}

In view of the above lemma, given any two positive integers $p < q$, we prove the existence of a graph $G_{p,q}$ satisfying $dem(G_{p,q}) = p$ and $meg(G_{p,q}) = q.$ 
Note that it is possible to have $dem(G)=1$ for a graph $G$, (for example paths~\cite{foucaud2022monitoring}). Whereas $meg(G) \geq 2$ for all graphs $G$ having at least one edge.

\begin{theorem}
    For any positive integers $1 \leq p < q$, there exists a connected graph $G_{p,q}$ with $dem(G_{p,q}) = p$ and $meg(G_{p,q}) = q.$
\end{theorem}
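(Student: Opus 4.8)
The plan is to construct $G_{p,q}$ by gluing together two gadgets: one that forces the $meg$ value up to $q$ while keeping $dem$ small, and another that raises $dem$ in a controlled way. A natural building block for the first part is a complete bipartite-type or ``theta'' gadget: recall from the proof of Theorem~\ref{thm Gabcd} that a bundle of several internally-disjoint $2$-paths between two fixed vertices $z_1,z_2$ forces \emph{all} the subdivision vertices into any MEG-set (none of these degree-$2$ vertices can be monitored from outside, since for each such vertex the only shortest $z_1$--$z_2$ paths avoiding it are the other parallel $2$-paths), whereas for $dem$ a single probe at $z_1$ together with the subdivision vertices themselves is wasteful — one can drop a subdivision vertex because a $dem$-set needs only one monitoring endpoint per edge. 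So the skeleton of $G_{p,q}$ will be: take $q-p+1$ internally-disjoint $2$-paths between two vertices, forcing $q-p+1$ subdivision vertices plus possibly the two hubs into the MEG-set, and then attach $p-1$ pendant-like structures (pendant vertices, or twin pairs, per Lemmas~\ref{lemma:simpl} and \ref{lemma:twins}) to push both $meg$ and $dem$ up by the same additive $p-1$.

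The key steps, in order: (1) Describe the construction precisely, choosing the number of parallel $2$-paths and the number and type of forced vertices so that the arithmetic gives exactly $meg = q$; the guiding identity is $meg = (\text{forced pendant/twin vertices}) + (\text{forced subdivision vertices}) + (\text{possibly hubs})$, matched against $q$. (2) Prove the upper bound $meg(G_{p,q}) \le q$ by exhibiting an explicit MEG-set of size $q$ — typically the union of all pendants, all twin vertices, all subdivision vertices of the parallel $2$-paths, and whatever hub vertices are needed — and checking that every edge (pendant edges, the $2$-path edges, and the edges inside any gadget) lies on all shortest paths between some pair of this set. (3) Prove the matching lower bound $meg(G_{p,q}) \ge q$ by arguing that each of these vertices is forced: pendants by Lemma~\ref{lemma:simpl}, twins by Lemma~\ref{lemma:twins}, and each subdivision vertex $v_i$ by the observation that the edge $z_1 v_i$ (equivalently $v_i z_2$) can only be monitored by a pair one of whose members is $v_i$ itself or lies ``past'' $v_i$, which in this gadget reduces to $v_i$ being in the set — this is exactly the style of argument used for the set $V$ in Theorem~\ref{thm Gabcd}. (4) Prove $dem(G_{p,q}) \le p$ by giving an explicit distance-edge monitoring set of size $p$: take the $p-1$ forced pendant-type vertices together with one hub vertex (say $z_1$), and verify that for every edge $e$ there is a vertex $x$ in this set such that $e$ lies on all shortest paths from $x$ to \emph{some} vertex $y$ of $G$ (crucially $y$ need not be in the set) — the edge $z_1 v_i$ is monitored by the pair $(z_1, v_i)$ with $v_i \notin$ set, which is why one subdivision vertex's worth of cost is saved. (5) Prove $dem(G_{p,q}) \ge p$, either directly by a forcing argument on the pendant/twin gadgets (adapting the known $dem$ lower-bound techniques from~\cite{foucaud2022monitoring}), or by invoking that $dem$ of the pendant part alone is already $p$.

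I expect the main obstacle to be step (5), the lower bound $dem(G_{p,q}) \ge p$, because $dem$ does not enjoy the clean simplicial/twin forcing lemmas that $meg$ does — a single $dem$-probe can monitor many edges through external endpoints, so one must design the $p-1$ pendant-type gadgets carefully (e.g. as sufficiently many disjoint pendant edges at a common vertex, or disjoint ``claws'') so that each gadget provably requires its own probe, and then argue these requirements are independent. A secondary nuisance is making the edge case $q = p+1$ (only one subdivision vertex, or none, plus exactly one extra forced vertex) work cleanly, since with a single parallel $2$-path the ``parallel edges'' trick degenerates; here one may instead use a $4$-cycle $z_1 v z_2 v'$ whose two degree-$2$ vertices are both forced into the MEG-set but only one is needed for $dem$, or simply handle small $q$ separately. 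Once the construction parameters are pinned down, the four bound verifications are routine and mirror the computations already carried out in the proof of Theorem~\ref{thm Gabcd}.
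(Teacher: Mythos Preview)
Your outline is far more elaborate than necessary, and the obstacle you flag in step~(5) is in fact fatal to the construction as sketched. Pendant vertices attached at a hub do \emph{not} raise $dem$: if $u$ is a pendant with neighbour $z_1$, then for \emph{any} vertex $x$ the edge $z_1u$ lies on every shortest $x$--$u$ path, so a single probe anywhere monitors all pendant edges. In your theta gadget (many parallel $2$-paths between $z_1$ and $z_2$, plus pendants at $z_1$), the set $\{z_1,z_2\}$ is already a distance-edge monitoring set, giving $dem\le 2$ regardless of $p$. So ``$p-1$ pendant-like structures'' do not push $dem$ up by $p-1$, and your arithmetic collapses. Disjoint claws have the same defect. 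To make your scheme work you would need a gadget that genuinely forces $p-1$ independent $dem$-probes, which amounts to rediscovering the structures whose $dem$ is known from~\cite{foucaud2022monitoring}.

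The paper bypasses all of this with a two-line construction. For $p=1$ take the star $K_{1,q}$: the centre is a minimum $dem$-set (size~$1$) and the $q$ leaves form the minimum MEG-set. For $p\ge 2$ take $K_{p+1}$ and attach $q-p$ pendants to one vertex $u_1$: then $\{u_2,\dots,u_{p+1}\}$ is a minimum $dem$-set of size $p$ (this is a known computation for $K_{p+1}$ with pendants from~\cite{foucaud2022monitoring}), while the $q-p$ pendants together with $u_2,\dots,u_{p+1}$ form the minimum MEG-set of size $q$ (pendants are forced by Lemma~\ref{lemma:simpl}, the clique vertices $u_2,\dots,u_{p+1}$ are simplicial in the resulting graph and hence forced too, and $u_1$ is excluded by Corollary~\ref{cor pendent neighbor not in meg}). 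No theta gadgets, no subdivision vertices, no delicate $dem$ lower bound are needed --- the hard work is entirely offloaded to the cited $dem$ results for complete graphs.
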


\begin{proof}
    We provide two separate constructions for the proof. One  for the case when $p=1$,  and the other for the case when $p \geq 2$. 

    \medskip 

      \noindent \textit{Case 1: When $1 = p < q$:}     
    Consider the star graph $K_{1,q}$ with  partite sets
    $\{y\}$ and $\{u_1, u_2, \cdots u_q \}$. 
    Notice that $\{y\}$ is a minimum distance-edge monitoring set of $G$~\cite{foucaud2022monitoring}. Thus $dem(G_{p,q}) = dem(K_{1,q}) = 1=p$. 
    On the other hand, the set of all leaves, that is, $\{u_1, u_2, \cdots u_q \}$  is a 
    minimal MEG-set~\cite{foucaud2023monitoring}. Hence $meg(G_{p,q}) = meg(K_{1,q}) = q$. 

    \medskip

    \noindent  \textit{Case 2: When $2 \leq p < q$:}
   Let $u_1,u_2, \cdots u_{p+1}$ be the vertices of a complete graph $K_{p+1}$. We add $(q-p)$ pendent vertices in the neighborhood of $u_1$. The so-obtained graph is $G_{p,q}$. 
    We know that a minimum distance-edge monitoring set of 
    $G_{p,q}$ is 
    $\{ u_2, u_3, \cdots, u_{p+1}\}$~\cite{foucaud2022monitoring}, and a minimum MEG-set of this graph is $\{v_1, v_2, \cdots v_{q-p}, u_2, u_3, \cdots u_{p+1}  \}$~\cite{foucaud2023monitoring}. 
    Hence $dem(G_{p,q}) = p$ and $meg(G_{p,q}) = q$. 
\end{proof}

\section{Conditions for a vertex being in all or no optimal MEG-sets}\label{sec megfull}
In their introductory paper on monitoring edge-geodetic sets, Foucaud, Krishna and Ramasubramony Sulochana~\cite{foucaud2023monitoring}   asked to characterize 
the graphs $G$ having $meg(G) = |V(G)|$. 
We provide a definitive answer to their question, and to this end, we give a necessary and sufficient condition for a vertex to be in any MEG-set of a graph. An \textit{induced $2$-path} of a graph $G$ is an ordered set of three vertices $u, v, x$ such that $u, x$ are adjacent to $v$ while  $u$ is not adjacent to $x$.

\begin{theorem}\label{thm meg full}
Let $G$ be a graph. A vertex $v\in V(G)$ is in every MEG-set of $G$ if and only if there exists $u \in N(v)$ such that for any vertex $x\in N(v)$, any induced $2$-path $uvx$ is part of a $4$-cycle.
\end{theorem}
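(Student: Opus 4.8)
The plan is to reduce the statement to a purely local condition on the edges incident to $v$. First I would record two elementary observations. (i) A vertex $v$ lies in every MEG-set if and only if $V(G)\setminus\{v\}$ is \emph{not} an MEG-set: any MEG-set avoiding $v$ is a subset of $V(G)\setminus\{v\}$, and adding vertices to an MEG-set keeps it an MEG-set. (ii) Any edge $ab$ with $v\notin\{a,b\}$ is monitored by the pair $\{a,b\}$ itself, since its only shortest path is the edge; hence $V(G)\setminus\{v\}$ is an MEG-set if and only if every edge incident to $v$ is monitored by some pair of vertices from $V(G)\setminus\{v\}$. So the theorem is equivalent to: some edge at $v$ is monitored by no $v$-avoiding pair, if and only if the stated condition $(\star)$ holds (there is $u\in N(v)$ such that every induced $2$-path $uvx$ lies on a $4$-cycle).

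For the direction ``$(\star)\Rightarrow v$ is forced'', I would take the witness $u\in N(v)$ from $(\star)$ and show that the edge $uv$ is monitored by no pair $\{a,b\}\subseteq V(G)\setminus\{v\}$. Suppose it were. Then every shortest $a$-$b$ path uses $uv$; fix one such path $P$, and name its endpoints so that $P$ visits $u$ before $v$. Let $x$ be the vertex following $v$ on $P$, which exists because $b\neq v$; then $x\in N(v)$ and $x\neq u$. If $u$ were adjacent to $x$, replacing the subpath $u,v,x$ by $u,x$ would yield an $a$-$b$ walk of length $d(a,b)-1$, impossible; hence $uvx$ is an induced $2$-path. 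By $(\star)$ it lies on a $4$-cycle, so there is $w\neq v$ with $w\in N(u)\cap N(x)$. Replacing $u,v,x$ by $u,w,x$ in $P$ produces an $a$-$b$ walk of length exactly $d(a,b)$, hence a shortest $a$-$b$ path; since $v$ occurs on the path $P$ exactly once, immediately after $u$, this new path avoids $v$ entirely, and in particular does not contain $uv$ --- contradicting that $\{a,b\}$ monitors $uv$.

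For the converse I argue contrapositively: assume $(\star)$ fails, so for every $u\in N(v)$ there is $x\in N(v)$ with an induced $2$-path $uvx$ not on any $4$-cycle; I will show every edge incident to $v$ is monitored by a $v$-avoiding pair, so that $V(G)\setminus\{v\}$ is an MEG-set. Fix an edge $uv$ and pick such a bad $2$-path $uvx$. Not lying on a $4$-cycle means $u$ and $x$ have no common neighbour other than $v$, i.e. $N(u)\cap N(x)=\{v\}$; combined with $u\not\sim x$ and $u\neq x$, this forces $d(u,x)=2$ with $u,v,x$ the \emph{unique} shortest $u$-$x$ path. Hence $\{u,x\}\subseteq V(G)\setminus\{v\}$ monitors $uv$, as needed.

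The conceptual core is the reduction to edges incident to $v$ together with the two dual ``detour / no-detour'' arguments; the step requiring the most care is verifying, in the first direction, that the rerouted walk is genuinely a shortest path and genuinely misses $v$ --- both of which follow from $P$ being a shortest path in which $v$ appears exactly once, right after $u$. Beyond this bookkeeping I do not expect a serious obstacle.
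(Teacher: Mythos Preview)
Your proof is correct and follows essentially the same approach as the paper: both reduce to whether $V(G)\setminus\{v\}$ is an MEG-set, then use the detour/no-detour dichotomy on the edge $uv$ for the witnessing neighbour $u$. Your write-up is in fact a bit more careful than the paper's on two points --- you explicitly verify that the vertex $x$ following $v$ on $P$ yields an \emph{induced} $2$-path $uvx$ (so $(\star)$ applies), and you justify why the rerouted walk is a genuine shortest path avoiding $v$ --- whereas the paper leaves these implicit.
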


\begin{proof}
    For the necessary condition, let us assume that a vertex $v \in V(G)$ is in every MEG-set of $G$. We have to prove that there exists $u\in N(v)$, such that any induced $2$-path $uvx$ is part of a $4$-cycle. We prove it by contradiction.
    
    Suppose for every $u\in N(v)$, there exists an induced $2$-path $uvx$ such that $uvx$ is not part of a $4$-cycle. As $uvx$ is an induced $2$-path, observe that $u$ and $x$ are not adjacent, also as $uvx$ is not part of a $4$-cycle, we get, $d(u,x)=2$ and the only shortest path between $u$ and $x$ is via $v$. This implies that, if we take $u$ and $x$ in our MEG-set $S$, then $xv$ and $uv$ are monitored. 
    Hence, all the neighbors of $v$ can monitor all the edges incident to $v$. 
    Therefore, in particular, 
    $V(G) \setminus \{v\}$ is an MEG-set of $G$. This is a contradiction to the fact that $v$ is in every 
    MEG-set of $G$. 
    Thus, the necessary condition for a vertex $v$ to be part 
    of every MEG-set of $G$ is proved according to the statement.

    For the sufficient condition, let us assume that for some vertex $v$ of $G$, there exists $u \in N(v)$ such that any induced $2$-path $uvx$ is part of a $4$-cycle. We need to prove that $v$ is in every MEG-set. Thus, it is enough to show that 
    $S = V(G) \setminus \{v\}$ is not an MEG-set of $G$. Therefore, we would like to find an edge which is not monitored by the vertices of $S$. 
    
    We first observe that if there does not exist any induced $2$-path of the form $uvx$, then $v$ must be a simplicial vertex, and thus we know that $v$ belongs to every MEG-set of $G$ by Lemma~\ref{lemma:simpl}. On the other hand, if there exists an induced $2$-path of the form $uvx$, then according to our assumption, there exists a $4$-cycle of the form 
    $uvxwu$. Assume there exist $a, b$ two vertices of $G$ monitoring the edge $uv$ (where $a$ and $u$ could be the same vertex). We consider a shortest path from $a$ to $b$ that we denote $P$, where $P=a\dots uvx\dots b$ (where $x$ and $b$ could be the same vertex). Note that $P'=a\dots uwx\dots b$ is another shortest path from $a$ to $b$. Hence $a, b$ do not monitor the edge $uv$, which implies that $v$ has to be part of every MEG-set. This concludes the proof. 
    \end{proof}

An immediate corollary characterizes all graphs $G$ with $meg(G) = |V(G)|$. This answers an open question asked in~\cite{foucaud2023monitoring}.      

\begin{corollary}\label{cor:all}
    Let $G$ be a graph of order $n$. Then, $meg(G)=n$ if and only if for every $v\in V(G)$, there exists $u\in N(v)$ such that any induced $2$-path $uvx$ is part of a $4$-cycle.
\end{corollary}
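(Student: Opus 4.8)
The plan is to derive Corollary~\ref{cor:all} directly from Theorem~\ref{thm meg full} by observing that $meg(G)=n$ is equivalent to saying that \emph{every} vertex $v\in V(G)$ belongs to every MEG-set of $G$. First I would note the easy direction of this equivalence: if $meg(G)=n$, then the unique MEG-set is $V(G)$ itself, so trivially every vertex is contained in every MEG-set. Conversely, if every vertex of $G$ lies in every MEG-set, then any MEG-set must contain all of $V(G)$, and since $V(G)$ is always an MEG-set (monitoring is a monotone property under adding vertices, and the full vertex set monitors every edge — indeed for an edge $e=xy$ the pair $\{x,y\}$ monitors it), we get $meg(G)=|V(G)|=n$.

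Having established that equivalence, the second step is simply to apply Theorem~\ref{thm meg full} vertex by vertex: the statement ``$v$ is in every MEG-set of $G$'' is, by the theorem, logically equivalent to ``there exists $u\in N(v)$ such that for any $x\in N(v)$, any induced $2$-path $uvx$ is part of a $4$-cycle.'' Quantifying this condition over all $v\in V(G)$ and combining it with the equivalence from the first step yields exactly the biconditional in the corollary. There is essentially no further work: the proof is a two-line formalisation.

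I do not expect any genuine obstacle here, since all the content lives in Theorem~\ref{thm meg full}. The only point requiring a sentence of care is the ``$V(G)$ is always an MEG-set'' remark used in the converse direction — one should make sure this is stated (it follows because each edge $xy$ is monitored by the pair $\{x,y\}$, every shortest $x$-$y$ path being the single edge $xy$). A minor subtlety worth flagging is the degenerate case of an edgeless graph (or isolated vertices), where monitoring is vacuous; but since the surrounding results already assume graphs with at least one edge, or simply treat $meg$ on such graphs as $n$ vacuously, this does not affect the argument. Thus the corollary is immediate.

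\begin{proof}
By definition $meg(G)=n$ holds if and only if the only MEG-set of $G$ is $V(G)$ itself. Since $V(G)$ is always an MEG-set of $G$ (each edge $xy$ is monitored by the pair $\{x,y\}$, as the unique shortest $x$-$y$ path is the edge $xy$), this happens if and only if every vertex $v\in V(G)$ lies in every MEG-set of $G$. By Theorem~\ref{thm meg full}, the latter condition is equivalent to requiring that for every $v\in V(G)$ there exists $u\in N(v)$ such that any induced $2$-path $uvx$ is part of a $4$-cycle. Combining these equivalences gives the claim.
\end{proof}
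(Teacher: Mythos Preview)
Your proposal is correct and takes essentially the same approach as the paper: both reduce the corollary directly to Theorem~\ref{thm meg full} by noting that $meg(G)=n$ is equivalent to every vertex lying in every MEG-set. Your write-up is simply more explicit than the paper's two-line version, in particular by spelling out that $V(G)$ is always an MEG-set (via the endpoints of each edge), which the paper leaves implicit.
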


\begin{proof}
    The proof directly follows from Theorem~\ref{thm meg full}, since if there exists $u\in V(G)$ that does not fulfill the condition, then $V(G) \setminus \{u\}$ would be an MEG-set.
\end{proof}

A \textit{universal} vertex $u$ of a graph $G$ is a vertex that is adjacent to every vertex except 
itself in $G$.

\begin{corollary}\label{cor:uvertex}
    Let $G$ be a graph with a universal vertex $u$. Then any vertex $v \neq u$ is in every MEG-set of $G$. In particular, if $G$ has $n$ vertices, then $meg(G)\geq n-1$.
\end{corollary}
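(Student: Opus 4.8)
The plan is to apply Theorem~\ref{thm meg full} directly to each vertex $v \neq u$, using $u$ itself as the required neighbor. First I would observe that since $u$ is universal, $u \in N(v)$ for every $v \neq u$, so $u$ is a legitimate candidate for the vertex ``$u$'' in the statement of Theorem~\ref{thm meg full}. Then I need to check that for any $x \in N(v)$, every induced $2$-path $uvx$ is part of a $4$-cycle. Fix such an $x$; since $uvx$ is an induced $2$-path, $u$ and $x$ are non-adjacent --- but this is impossible because $u$ is universal and hence adjacent to $x$. So there are in fact \emph{no} induced $2$-paths of the form $uvx$, and the condition of Theorem~\ref{thm meg full} is vacuously satisfied. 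Therefore $v$ lies in every MEG-set of $G$.

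For the ``in particular'' clause, once we know every $v \in V(G) \setminus \{u\}$ belongs to every MEG-set, any MEG-set must contain all $n-1$ of these vertices, so $meg(G) \geq n-1$. I would add a one-line remark that this is consistent with the earlier observation (via Lemma~\ref{lemma:simpl}, since in a graph with a universal vertex every other vertex whose neighborhood is a clique would be simplicial) --- though the clean argument is just the vacuous-case application of Theorem~\ref{thm meg full}, which handles all $v \neq u$ uniformly regardless of whether $v$ is simplicial.

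I do not anticipate any real obstacle here: the whole content is the vacuous-truth observation that a universal vertex kills all induced $2$-paths through it. The only thing to be slightly careful about is the degenerate case where $G$ has very few vertices (e.g. $n \leq 2$), but there the statement $meg(G) \geq n-1$ is trivially true anyway, and if $v$ has degree at least $1$ the argument goes through unchanged; if $G = K_1$ there is no vertex $v \neq u$ to consider and nothing to prove. So the proof is essentially three sentences invoking Theorem~\ref{thm meg full}.
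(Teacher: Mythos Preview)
Your proposal is correct and follows essentially the same approach as the paper: both apply Theorem~\ref{thm meg full} to each $v\neq u$ with the universal vertex $u$ as the chosen neighbor, and observe that no induced $2$-path $uvx$ can exist since $u$ is adjacent to every $x$, so the condition is vacuously satisfied. Your additional remarks on degenerate small cases and the simplicial-vertex connection are harmless extras not present in the paper's two-sentence proof.
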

\begin{proof}
    Since $u$ is a universal vertex, for any $v, x\in V(G)$, $uvx$ is never an induced $2$-path as $u, x$ are adjacent. Therefore, every vertex $v \neq u$ satisfies the necessary condition for being part of every MEG-set of $G$ according to Theorem~\ref{thm meg full}. 
\end{proof}

The \textit{girth} of a graph is the length of its smallest cycle.

\begin{corollary}\label{cor girth 5}
    Let $G \neq K_1, K_2$ be a 
    connected graph with girth at least $5$. 
    If $G$ has $n$ vertices, 
    then $meg(G) \leq n-1$.
\end{corollary}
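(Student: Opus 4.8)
The plan is to apply Corollary~\ref{cor:all} contrapositively: it suffices to exhibit a single vertex $v \in V(G)$ that fails the condition "there exists $u \in N(v)$ such that any induced $2$-path $uvx$ is part of a $4$-cycle." Since $G$ has girth at least $5$, it contains no $3$-cycles and no $4$-cycles, so the condition for a vertex $v$ simplifies dramatically: it can only be satisfied if there exists $u \in N(v)$ for which \emph{no} induced $2$-path of the form $uvx$ exists at all, i.e., $u$ together with every other neighbor $x$ of $v$ spans an edge or is equal. In a triangle-free graph two neighbors of $v$ are never adjacent, so this forces $v$ to have $u$ as its unique neighbor, i.e., $\deg(v) = 1$. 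Hence the condition of Corollary~\ref{cor:all} holds at $v$ only if $v$ is a pendent vertex.

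So the whole statement reduces to: a connected graph $G \neq K_1, K_2$ with girth at least $5$ has at least one vertex of degree at least $2$. First I would note that $G$, having finite girth, contains a cycle, and every vertex on that cycle has degree at least $2$; therefore there is a vertex $v$ with $\deg(v) \geq 2$, and such $v$ admits an induced $2$-path $uvx$ with $u,x$ non-adjacent (triangle-freeness) and not lying on any $4$-cycle (girth $\geq 5$), for every choice of $u \in N(v)$. By Corollary~\ref{cor:all}, this witnesses $meg(G) \neq n$, and since $meg(G) \leq n$ always, we conclude $meg(G) \leq n-1$.

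The only mild subtlety is handling the degenerate inputs correctly, which is exactly why $K_1$ and $K_2$ are excluded: $K_1$ has no edges and $K_2$ has both vertices pendent, so neither has a cycle and the argument above does not produce a high-degree vertex. For all other connected graphs of girth at least $5$ the graph is genuinely cyclic (a connected acyclic graph of order $\geq 3$ is a tree, hence has girth $\infty$; wait — a tree has no cycle, so "girth at least $5$" would vacuously hold), so I should be slightly more careful: the hypothesis "girth at least $5$" is usually taken to imply the graph has a cycle, but to be safe I would split into the case where $G$ is a tree (then $G$ has $\geq 3$ vertices, hence an internal vertex $v$ of degree $\geq 2$, and the same induced-$2$-path argument applies since trees are triangle- and $C_4$-free) and the case where $G$ has a cycle (pick $v$ on the cycle). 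I do not anticipate any real obstacle here; the content is entirely in recognizing that girth $\geq 5$ collapses the criterion of Theorem~\ref{thm meg full} to the pendent-vertex condition.
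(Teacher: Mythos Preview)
Your proposal is correct and follows essentially the same approach as the paper: both observe that under girth at least $5$ the criterion of Theorem~\ref{thm meg full} collapses to ``$v$ is pendent,'' and then argue that not every vertex of a connected $G \neq K_1, K_2$ can be pendent. The paper handles the last step more directly by noting that a connected graph in which every vertex is pendent must be $K_2$, which spares you the tree-versus-cycle case split; your version is a bit more cautious about the convention for acyclic girth but arrives at the same conclusion.
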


\begin{proof}
Notice that, if a vertex $v$ of $G$ satisfies the condition of Theorem \ref{thm meg full}, then either $v$ has to be part of a $3$-cycle, or a $4$-cycle, or $v$ is a pendent vertex. As $G$ has girth at least $5$, it cannot contain any $3$-cycle or $4$-cycle. 
Thus, if $G$ has $meg(G) = n$, then all vertices of $G$ are pendent vertices. This is only possible when 
$G = K_2$, which is not possible due to our assumption.  Thus, not every vertex of $G$ can satisfy the condition of Theorem~\ref{thm meg full}. Hence, $meg(G) \leq n-1$.
\end{proof}
\begin{remark}Note that this bound is optimal for star as the girth of a star is infinite.
\end{remark}

\section{MEG-extremal graphs}\label{sec:ext}
An \textit{MEG-extremal graph} $G$ is a graph having $meg(G)=|V(G)|$. 
In~\cite{foucaud2023monitoringfull,foucaud2023monitoring}, the following families of \emph{MEG-extremal} graphs were presented: complete graphs, complete multipartite graphs (except stars), and hypercubes. In~\cite{haslegrave2023monitoring}, the author showed that for any MEG-extremal graph $G$ and any graph $H$, both the Cartesian product and the strong product of $G$ and $H$ are MEG-extremal graphs. 

In this section we extend the same lines of work, primarily using the tools built in Section~\ref{sec megfull}. 
We organize the section in two subsections. In the first subsection we provide  complete characterization of $meg(G)$ when $G$ belongs to the families of cographs, block graphs, well-partitioned chordal graphs, split graphs, and proper interval graphs. Using our characterization we then conclude that the $2$-connected graphs in the above mentioned families are MEG-extremal. In the second subsection we present a short proof of a result from~\cite{haslegrave2023monitoring} where we show that the Cartesian product and the strong product of 
an MEG-extremal graph with any other graph is also MEG-extremal. Moreover, we  show that the tensor product of two MEG-extremal graphs is always MEG-extremal.

\subsection{Graph families}
We start by recalling a useful lemma from~\cite{foucaud2023monitoring} on cut-vertices. 

\begin{lemma}{(Lemma 2.3 of \cite{foucaud2023monitoring})}\label{lemma:curv}
Let $G$ be a graph, and $u$ be a cut-vertex of $G$. Then $u$ is never part of any minimal MEG-set of $G$.
\end{lemma}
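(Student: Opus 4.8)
The plan is to prove the contrapositive in a strong form: if $M$ is \emph{any} MEG-set of $G$ with $u\in M$, then $M\setminus\{u\}$ is again an MEG-set, so $M$ is not minimal; hence $u$ lies in no minimal MEG-set. I will assume $G$ is connected (otherwise one works inside the component of $u$). Let $C_1,\dots,C_k$ with $k\geq 2$ be the connected components of $G-u$.

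First I would record a preliminary observation: \emph{every} component $C_j$ contains a vertex of $M$. If $|V(C_j)|=1$, that single vertex is a pendent vertex of $G$ and hence lies in $M$ by Lemma~\ref{lemma:simpl}. If $C_j$ contains an edge $xy$, pick $a,b\in M$ monitoring $xy$. A simple path between two vertices lying outside $C_j\cup\{u\}$ can never traverse an edge of $C_j$, since it would have to enter and leave $C_j$ through the cut-vertex $u$ twice; so at least one of $a,b$ lies in $C_j\cup\{u\}$. If, say, $a=u$, then a shortest $u$--$b$ path using the edge $xy\subseteq C_j$ must end inside $C_j$ (it cannot re-enter $u$), forcing $b\in C_j$. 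Either way $M\cap C_j\neq\emptyset$.

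Now fix an arbitrary edge $e$ of $G$; I must check it is monitored by $M\setminus\{u\}$. Since $M$ monitors $e$, there are $a,b\in M$ with $e$ on every shortest $a$--$b$ path. If $u\notin\{a,b\}$ we are done, so assume $a=u$ and let $b\in C_i$. Choose $c\in M\cap C_j$ for some $j\neq i$, which exists by the observation; note $b,c\in M\setminus\{u\}$. The key point is that every $c$--$b$ path passes through $u$, so $d(c,b)=d(c,u)+d(u,b)$, and any shortest $c$--$b$ path therefore decomposes as a shortest $c$--$u$ path followed by a shortest $u$--$b$ path; since $e$ lies on every shortest $u$--$b$ path, it lies on every shortest $c$--$b$ path. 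Hence the pair $c,b\in M\setminus\{u\}$ monitors $e$, and $M\setminus\{u\}$ is an MEG-set, so $M$ was not minimal.

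I expect the only delicate point to be the preliminary observation: ruling out that all of $M$'s ``presence'' near a component $C_j$ is channelled through $u$ rather than through an honest vertex of $C_j$. The cut-vertex property (a simple, hence shortest, path cannot both enter and leave $C_j$) is precisely what forces a genuine vertex of $M$ into each $C_j$, but the case split — single-vertex component versus component with an edge, and whether a monitoring pair involves $u$ — needs to be written out carefully to be airtight.
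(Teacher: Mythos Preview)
Your proof is correct. Note, however, that the present paper does not supply its own proof of this lemma: it is merely quoted from~\cite{foucaud2023monitoring} (Lemma~2.3 there) and used as a black box, so there is no in-paper argument to compare against.

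For what it is worth, your argument is the standard one and is carried out cleanly. The two ingredients are exactly right: (i) every component $C_j$ of $G-u$ must already contain some vertex of $M$ --- handled via the pendent/simplicial case when $|C_j|=1$ and via the ``a shortest path cannot enter and leave $C_j$ through $u$ twice'' observation when $C_j$ has an edge; and (ii) whenever $u$ appears in a monitoring pair $(u,b)$ for an edge $e$, any $c\in M$ chosen from a component different from that of $b$ satisfies $d(c,b)=d(c,u)+d(u,b)$, so every shortest $c$--$b$ path contains a shortest $u$--$b$ subpath and hence contains $e$. The case analysis you flag as ``delicate'' is indeed the only place one must be careful, and you have covered it.
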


\begin{remark}\label{rem no deg 1 in minMEG}
    Due to Lemma~\ref{lemma:curv}, for a connected graph $G \neq K_2$, every vertex of degree $1$ must be part of any MEG-set, and its neighbor cannot be
    a part of any minimum MEG-set. Therefore, if $meg(G) = |V(G)|$, then $G$ must have minimum degree at least $2$. 
\end{remark}

A \textit{cograph} $G$ is a graph which does not contain any induced path on $4$ vertices, that is, an induced $P_4$. A \textit{complete join} of two graphs $G_1$ and $G_2$ is the graph $G$ obtained by adding an edge between each vertex of $G_1$
and each vertex of $G_2$.

\begin{theorem}\label{thm:extcographs}
Let $G$ be a connected cograph on $n$ vertices. Then either $G$ has a cut-vertex and $meg(G) =n-1$, or else $meg(G)=n$.
\end{theorem}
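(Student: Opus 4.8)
The statement has two parts: (i) if $G$ is a connected cograph with a cut-vertex, then $meg(G)=n-1$; and (ii) if $G$ is a connected cograph with no cut-vertex (i.e.\ $2$-connected), then $meg(G)=n$. The plan is to use the structural dichotomy for connected cographs: a connected cograph $G$ on at least two vertices is the complete join $G_1 \ast G_2$ of two (possibly disconnected) smaller cographs. I would first dispose of the trivial small cases ($K_1$, $K_2$), and then analyze the join structure.

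For part (ii), the quickest route is Corollary~\ref{cor:uvertex}: I would argue that a $2$-connected cograph on $n \geq 3$ vertices always contains a universal vertex, except possibly when it is already $K_n$, which is MEG-extremal anyway. Concretely, write $G = G_1 \ast G_2$. If both $G_1$ and $G_2$ were disconnected or had at least two vertices in a way that avoids universal vertices, I would need to check whether $G$ can still be $2$-connected without a universal vertex. In fact the cleaner statement is: a connected cograph $G$ has no universal vertex if and only if in its cotree both children at the top join-node are themselves disconnected; I would show such a $G$ has a cut-vertex only in degenerate situations, and otherwise apply Corollary~\ref{cor:uvertex} to conclude $meg(G) \geq n-1$, then upgrade to $n$ using the $2$-connectivity (a cograph with a universal vertex and no cut-vertex forces, via Theorem~\ref{thm meg full}, every non-universal vertex into every MEG-set, and the universal vertex too unless it is genuinely removable — here I must be careful and check that the universal vertex itself satisfies the condition of Theorem~\ref{thm meg full} whenever $G$ is $2$-connected, e.g.\ by exhibiting the required $4$-cycles through it, which exist since its non-neighbourhood structure in $G - u$ contains a non-edge inside a $2$-connected graph).

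For part (i), suppose $G$ has a cut-vertex $c$. By Lemma~\ref{lemma:curv}, $c$ is never in any minimal MEG-set, so $meg(G) \leq n-1$. For the matching lower bound $meg(G) \geq n-1$, I would show every vertex $v \neq c$ lies in every MEG-set by verifying the necessary-and-sufficient condition of Theorem~\ref{thm meg full}: find $u \in N(v)$ such that every induced $2$-path $uvx$ lies on a $4$-cycle. The key observation is that a connected cograph with a cut-vertex still has a very restricted join structure — writing $G = G_1 \ast G_2$, a cut-vertex can only arise when one side, say $G_2$, is a single vertex $c$ and $G_1$ is disconnected; then $c$ is adjacent to everything, so $c$ is the unique cut-vertex and is universal. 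Now Corollary~\ref{cor:uvertex} immediately gives that every $v \neq c$ is in every MEG-set, hence $meg(G) = n-1$. I would also need to rule out the possibility of two distinct cut-vertices or a cut-vertex that is non-universal; this follows by induction on the cotree or directly from the join characterization.

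The main obstacle I anticipate is the careful bookkeeping of the cograph join decomposition: precisely identifying when a connected cograph is $2$-connected versus has a cut-vertex, and in the latter case proving the cut-vertex must be universal (so that Corollary~\ref{cor:uvertex} applies cleanly). Once that structural lemma is nailed down — essentially ``a connected cograph has a cut-vertex iff it is the join of a single vertex with a disconnected cograph'' — both halves of the theorem follow almost immediately from Corollary~\ref{cor:uvertex} and Lemma~\ref{lemma:curv}. A secondary subtlety is confirming that in the $2$-connected case the universal vertex itself is forced into every MEG-set; I would handle this by noting that $2$-connectivity of $G$ forces $G-u$ to have a non-edge (else $G$ is complete, already extremal), and any such non-edge $x,x'$ together with $u$ and a common neighbour witnesses that the relevant induced $2$-paths through $u$ close into $4$-cycles, so $u$ also satisfies Theorem~\ref{thm meg full}.
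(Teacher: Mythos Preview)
Your treatment of part (i) is essentially the paper's argument: both of you show that any cut-vertex of a connected cograph must be universal (you via the join decomposition, the paper via ``otherwise one finds an induced $P_4$''), and then combine Corollary~\ref{cor:uvertex} with Lemma~\ref{lemma:curv} to get $meg(G)=n-1$.

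Part (ii), however, has a real gap. Your plan hinges on a $2$-connected cograph having a universal vertex, but this is false: take $G=\overline{K_2}\ast\overline{K_2}=C_4$, or more generally $G=H_1\ast H_2$ with both $H_i$ disconnected cographs on at least two vertices (e.g.\ $K_{m,n}$ for $m,n\geq 2$). Such $G$ is $2$-connected yet has no universal vertex, so Corollary~\ref{cor:uvertex} does not apply. Your fallback sentence (``I would show such a $G$ has a cut-vertex only in degenerate situations'') is the wrong way round: these graphs are never cut-vertex graphs, they are precisely the $2$-connected cographs you still need to handle. Even in the favourable case where a universal vertex $u$ exists, your sketch for forcing $u$ itself into every MEG-set is incomplete: Theorem~\ref{thm meg full} requires a \emph{single} neighbour $w$ of $u$ such that \emph{every} induced $2$-path $wuy$ lies on a $4$-cycle, i.e.\ every non-neighbour $y$ of $w$ in $G-u$ shares a common neighbour with $w$ other than $u$; exhibiting one non-edge $x,x'$ does not do this.

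The paper bypasses universal vertices entirely. It applies Theorem~\ref{thm meg full} directly to an arbitrary vertex $v$ and an arbitrary neighbour $u$: given an induced $2$-path $uvx$, $2$-connectivity yields an induced $u$--$x$ path $P$ avoiding $v$, and $P_4$-freeness forces $P$ to have length~$2$, producing the required $4$-cycle $uvxwu$. This handles all $2$-connected cographs uniformly, with no case analysis on the cotree.
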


\begin{proof}
First assume that $G$ has a cut vertex $u$. Note that, $G - u$ has at least two components. 
Observe that $u$ must be a universal vertex as otherwise it is possible to find an induced $P_4$. 
Thus, $u$ must be the only cut-vertex of $G$. Hence, by Corollary~\ref{cor:uvertex} and Lemma~\ref{lemma:curv} $meg(G) = n-1$.

A $2$-connected cograph is of diameter at most $2$. 
Take any vertex  $v$ and any of its neighbors $u$. 
If there is an induced $2$-path of the form $uvx$ in $G$, then by $2$-connectedness there exists an induced path $P$ connecting $u$ and $x$ which is internally vertex disjoint with $uvx$. However, $P$ must have less than $4$ vertices, and thus, either $P$ is an edge, or an induced $2$-path. Therefore, $v$ must be part of every MEG-set of $G$ according to 
Theorem~\ref{thm meg full}. 
\end{proof}

A \textit{block graph} $G$ is a graph all of whose $2$-connected induced subgraphs are cliques.

\begin{theorem}\label{thm:extblockgraphs}
Let $G$ be a block graph with $k$ cut-vertices. If $G$ has $n$ vertices, then $meg(G)=n-k$.
\end{theorem}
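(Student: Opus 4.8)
The plan is to combine the ``vertex in every MEG-set'' machinery from Section~\ref{sec megfull} with the ``cut-vertex not in any minimal MEG-set'' statement (Lemma~\ref{lemma:curv}) and a lower-bound argument. First I would observe that in a block graph every edge lies in a unique block, which is a clique, so the non-cut-vertices are exactly the simplicial vertices (a vertex that is in only one block is simplicial in that clique), while cut-vertices belong to at least two blocks. By Lemma~\ref{lemma:simpl}, all $n-k$ simplicial (non-cut) vertices lie in every MEG-set, giving the lower bound $meg(G)\geq n-k$.

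For the upper bound I would show that the set $S$ of all non-cut-vertices is itself an MEG-set, i.e.\ $meg(G)\leq n-k$. Take any edge $e=ab$ of $G$; it lies in a unique block $B$, which is a clique. I would pick a non-cut-vertex $a'$ ``beyond $a$'' and a non-cut-vertex $b'$ ``beyond $b$'' in the block-cut-tree, chosen so that the unique shortest $a'$-$b'$ path enters $B$ at $a$, crosses the edge $ab$, and leaves at $b$. Concretely: if $a$ is a cut-vertex, some component of $G-a$ not containing $B\setminus\{a\}$ hangs off $a$; following the block-cut-tree into that component one reaches a leaf block whose simplicial vertex is a non-cut-vertex $a'$, and every shortest $a'$-to-anything-outside-that-subtree path passes through $a$; if $a$ is already a non-cut-vertex set $a'=a$. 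Do the same for $b$. Because $B$ is a clique, the only shortest $a$-$b$ path inside $B$ is the edge $ab$ itself, and concatenating the forced segments shows $e$ lies on every shortest $a'$-$b'$ path. Hence $S$ monitors $e$, so $S$ is an MEG-set and $meg(G)=n-k$.

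The main obstacle I anticipate is making the ``follow the block-cut-tree to a leaf and pick up its simplicial vertex'' step fully rigorous: one must verify (i) that such a leaf block always exists with a simplicial vertex that is non-cut (true since $G$ is finite and a leaf block contributes at least one simplicial vertex unless it equals all of $G$ — the case $G$ a single clique should be checked separately, where $k=0$ and $meg(G)=n$ by Lemma~\ref{lemma:simpl} or Corollary~\ref{cor:uvertex}), and (ii) that the concatenation of the three pieces ($a'$ down to $a$, the edge $ab$, $b$ down to $b'$) is actually a shortest path and is the \emph{unique} shortest path, which follows from the tree-like structure: distances in a block graph decompose additively along the block-cut-tree, and within each clique the unique geodesic between two vertices is the single edge. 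A clean way to phrase (ii) is: for $x$ in the $a'$-subtree and $y$ in the $b'$-subtree, $d(x,y)=d(x,a)+1+d(b,y)$ and every geodesic realizes this decomposition, hence uses $ab$. Once this distance-decomposition lemma for block graphs is stated, both bounds drop out and equality $meg(G)=n-k$ follows.

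I would also remark, as an immediate consequence matching the section's theme, that a $2$-connected block graph is a single clique $K_n$, which has $k=0$ cut-vertices, so the formula gives $meg(K_n)=n$, consistent with $K_n$ being MEG-extremal.
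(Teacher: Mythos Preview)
Your proposal is correct, but it does considerably more work than the paper's proof. You correctly identify that in a block graph every vertex is either simplicial or a cut-vertex, and you correctly derive the lower bound $meg(G)\ge n-k$ from Lemma~\ref{lemma:simpl}. For the upper bound, however, you give a direct block-cut-tree argument showing that the set $S$ of non-cut-vertices is an MEG-set, by chasing leaf blocks to find simplicial vertices $a',b'$ whose unique geodesic traverses a prescribed edge.

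The paper bypasses this entirely. You actually name the tool yourself in your opening sentence and then never use it: Lemma~\ref{lemma:curv} says no cut-vertex lies in any \emph{minimal} MEG-set. Since a minimum MEG-set is in particular minimal, it contains none of the $k$ cut-vertices, hence has size at most $n-k$; together with your lower bound this gives $meg(G)=n-k$ in one line. There is no need to verify independently that $S$ monitors every edge, and thus no need for the block-cut-tree traversal, the distance-decomposition lemma, or the separate treatment of the single-clique case.

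Your direct argument is not wasted effort in general --- it is essentially a proof of Lemma~\ref{lemma:curv} specialized to block graphs, and it makes the monitoring pairs explicit --- but for the purposes of this theorem it is redundant once Lemma~\ref{lemma:curv} is available.
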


\begin{proof}
In a block graph, a vertex is either simplicial or a cut-vertex. Hence, the result follows from Lemma~\ref{lemma:simpl} and Lemma~\ref{lemma:curv}.
\end{proof}

A graph is \textit{chordal} if it does not contain any induced cycle of length at least $4$. A graph $G$ is a \textit{well-partitioned chordal graph}~\cite{AHN2022112985} if its vertex set can be partitioned into cliques $C_1, C_2, \ldots, C_{\ell}$ that satisfies the following properties: 
\begin{enumerate}[(i)]
    \item The cliques $C_1, C_2, \ldots, C_{\ell}$ are called bags. 

    \item Two bags $C_i$ and $C_j$ are non-adjacent if there is no edge with one end point in $C_i$ and the other in $C_j$. 

    \item  Two bags $C_i$ and $C_j$ are adjacent
    if the edges between $C_i$ and $C_j$ induce a complete bipartite graph. The complete bipartite graph does not need to span the two bags. 

    \item Two bags are either adjacent or non-adjacent and the graph obtained by considering bags as vertices with the above-mentioned adjacency rule is a forest. 
\end{enumerate}

\begin{theorem}\label{thm:extwpchordalgraphs}
Let $G$ be a well-partitioned chordal graph with $k$ cut-vertices. If $G$ has $n$ vertices, then 
$meg(G)=n-k$.
\end{theorem}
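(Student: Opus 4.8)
The plan is to mimic the strategy used for block graphs (Theorem 1.22), but the structure of well-partitioned chordal graphs is richer, so I first want a clean understanding of which vertices are simplicial, which are cut-vertices, and how these two roles interact. First I would observe that, since the bags $C_1,\dots,C_\ell$ induce a tree $T$ under the bag-adjacency relation, a vertex $v$ lying in a leaf bag $C_i$ whose only neighbor-bag is $C_j$ is simplicial precisely when $v$ has no private neighbor in $C_j$ forced by the complete-bipartite condition — but actually every vertex of a leaf bag $C_i$ sees all of $C_i$ (a clique) and, if $C_i$ is adjacent to $C_j$, all of the relevant vertices of $C_j$; one must check whether $N(v)$ is a clique. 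The key structural claim I would isolate is: \emph{every vertex of $G$ is either simplicial or a cut-vertex, and the number of cut-vertices is exactly $k$.} Granting this claim, the theorem follows immediately from Lemma~1.5 (simplicial vertices lie in every MEG-set, so $meg(G) \ge n - k$, since at most $k$ vertices are non-simplicial) together with Lemma~1.14 / Lemma~1.20-style reasoning (cut-vertices lie in no minimal MEG-set, so $meg(G) \le n - k$), exactly as in the block-graph proof.

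The main work, then, is the dichotomy "simplicial or cut-vertex." I would argue as follows. Let $v \in C_i$ and suppose $v$ is not a cut-vertex; I must show $N(v)$ is a clique. The neighborhood of $v$ consists of $C_i \setminus \{v\}$ together with, for each bag $C_j$ adjacent to $C_i$, possibly some vertices of $C_j$ (by the complete-bipartite condition, $v$ is adjacent either to all of $C_j$ or, more precisely, the edges between $C_i$ and $C_j$ form a complete bipartite graph, so if any vertex of $C_i$ is adjacent to a vertex of $C_j$ then every vertex of $C_i$ is adjacent to every vertex of $C_j$). Suppose $C_i$ has two distinct adjacent bags $C_j$ and $C_{j'}$ in $T$. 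Then removing $v$ would need to be checked, but more to the point, take $a \in C_j$ and $b \in C_{j'}$ with $a,b \in N(v)$: since $C_j$ and $C_{j'}$ are distinct bags of $T$ and the bag-graph is a tree, $C_j$ and $C_{j'}$ are non-adjacent (any path in $T$ between them passes through $C_i$), so there is no edge between $a$ and $b$, whence $N(v)$ is not a clique. In that case I would show $v$ \emph{is} a cut-vertex: $a$ and $b$ lie in different components of $G - v$ — here I need that the only way to get from $C_j$'s side to $C_{j'}$'s side is through $C_i$, and within $C_i$ the vertex $v$ need not be the unique connector unless $C_i = \{v\}$; so this part is more delicate and I would instead phrase it as: if $v$ is not simplicial, then $v$ sees vertices in two bags $C_j, C_{j'}$ that are separated in $T$ by $C_i$, and then any vertex of $C_i$ adjacent to both serves as a cut-vertex — or, cleanly, one shows the set of non-simplicial vertices is exactly the set of cut-vertices by a tree-separator argument on $T$.

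The hardest part, I expect, is the bookkeeping showing that the count of non-simplicial vertices equals $k$ \emph{exactly} (not just $\le k$), i.e.\ that no cut-vertex is accidentally simplicial and no simplicial vertex is accidentally a cut-vertex; degenerate cases — singleton bags, a bag adjacent to several bags, a vertex of a bag $C_i$ where $C_i$ is an internal node of $T$ but $v$ happens to see only $C_i \setminus\{v\}$ plus one neighboring bag — need to be handled carefully, and I would want to lean on the precise definition of well-partitioned chordal graphs (especially property (iii), the complete-bipartite edge condition) to rule these out. Once the dichotomy and the exact count are established, the two inequalities $meg(G) \ge n-k$ and $meg(G) \le n-k$ are routine applications of Lemma~1.5 and Lemma~1.14, finishing the proof.
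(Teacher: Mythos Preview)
Your central structural claim --- that every vertex of a well-partitioned chordal graph is either simplicial or a cut-vertex --- is false, so the block-graph argument does not transfer. Take three bags $C_1=\{a,b\}$, $C_2=\{c,d\}$, $C_3=\{e,f\}$ with the bag-tree the path $C_1\text{--}C_2\text{--}C_3$ and with every vertex of $C_1$ adjacent to every vertex of $C_2$, and likewise for $C_2,C_3$. Then $N(c)=\{a,b,d,e,f\}$, which is not a clique since $a$ and $e$ are non-adjacent, so $c$ is not simplicial; but $G-c$ is still connected through $d$, so $c$ is not a cut-vertex either. You partially sense the trouble (``within $C_i$ the vertex $v$ need not be the unique connector unless $C_i=\{v\}$''), but then retreat to asserting the dichotomy anyway via a ``tree-separator argument,'' which cannot work because the dichotomy is simply not true. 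This is not a degenerate edge case: it is the generic behaviour whenever an internal bag has size at least $2$.

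The paper does \emph{not} attempt the dichotomy. Instead it takes a vertex $v$ that is neither simplicial nor a cut-vertex and shows directly, via Theorem~\ref{thm meg full}, that $v$ lies in every MEG-set: since $v$ is not simplicial its bag $C_i$ contains another vertex $u$; for any induced $2$-path $uvx$ the vertex $x$ lies in a neighbouring bag $C_j$, and since $v$ is not a cut-vertex some other $w\in C_i$ also has an edge into $C_j$, whence the complete-bipartite condition forces $w$ adjacent to $x$ (and $w$ is adjacent to $u$ inside the clique $C_i$), producing the $4$-cycle $uvxwu$. Thus every non-cut-vertex is forced into every MEG-set (whether or not it is simplicial), and Lemma~\ref{lemma:curv} removes the $k$ cut-vertices, giving $meg(G)=n-k$. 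The missing ingredient in your proposal is precisely this appeal to Theorem~\ref{thm meg full}; Lemma~\ref{lemma:simpl} alone is too weak to force the non-simplicial non-cut vertices into the MEG-set.
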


\begin{proof}
Let $v$ be a vertex of $G$ which is neither a cut-vertex nor a simplicial vertex. It is enough to show that there exists a neighbor $u$ of $v$ such that any induced $2$-path of the form $uvx$ is part of a $4$-cycle due to Lemmas~\ref{lemma:simpl} and~\ref{lemma:curv}. Assume that $v$ belongs to the bag $C_i$.

Note that, as $v$ is not a simplicial vertex, then $v$ is adjacent to a vertex in a different bag $C_k$. We also assumed $v$ was not a cut-vertex, hence there exists an edge incident to a vertex of $C_k$ and a vertex of $C_i$ distinct from $v$, hence $C_i$ must contain at least one more vertex in it, say $u$. 
Suppose $uvx$ is an induced $2$-path. As $u$ and $x$ are non-adjacent, $x$ must belong to a different bag $C_j$. Moreover, as $v$ is not a cut vertex, there must be another vertex $w$ in $C_i$ which has an edge with a vertex of $C_j$. Thus, due to the definition of a well-partitioned chordal graph, $w$ must be adjacent to both $u$ and $x$. That is, $uvxwu$ is a $4$-cycle. 
Hence, $v$ must be part of all MEG-sets of $G$ due to Theorem~\ref{thm meg full}. 
\end{proof}

A \textit{split graph} $G$ is graph whose vertices can be partitioned into a clique and an independent set.

\begin{corollary}
    Let $G$ be a split graph with no component equal to $K_1$ or $K_2$ with $k$ vertices having a pendent neighbor. If $G$ has $n$ vertices, then 
$meg(G)=n-k$.
\end{corollary}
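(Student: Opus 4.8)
The plan is to reduce the statement about split graphs to the already-proven theorem on well-partitioned chordal graphs, Theorem~\ref{thm:extwpchordalgraphs}. First I would observe that every split graph is a chordal graph, and in fact a split graph is a well-partitioned chordal graph: given the partition of $V(G)$ into a clique $K$ and an independent set $I$, one can take the bags to be $K$ together with the singletons $\{w\}$ for each $w \in I$. One checks that this bag partition satisfies properties (i)--(iv): each bag is trivially a clique; two singleton bags in $I$ are non-adjacent (since $I$ is independent); a singleton bag $\{w\}$ with $w \in I$ is adjacent to $K$ exactly when $w$ has a neighbor in $K$, and since $\{w\}$ has only one vertex the edges between $\{w\}$ and $K$ automatically induce a complete bipartite graph; and the auxiliary graph on the bags is a star centered at $K$ (every $w\in I$ has at least one neighbor, which must lie in $K$ since $G$ is connected), hence a tree.

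Next I would apply Theorem~\ref{thm:extwpchordalgraphs} to conclude that $meg(G) = n - k'$, where $k'$ is the number of cut-vertices of $G$. It then remains to show that, in a connected split graph, the cut-vertices are precisely the vertices having a pendent neighbor, so that $k' = k$. One direction is Corollary~\ref{cor pendent neighbor not in meg}'s setting: if $v$ has a pendent neighbor $u$, then removing $v$ isolates $u$, so $v$ is a cut-vertex (assuming $G \neq K_2$; the degenerate case is handled separately or excluded). For the converse, suppose $v$ is a cut-vertex of $G$; I need to produce a pendent neighbor of $v$. With the clique/independent-set partition $K \cup I$, a cut-vertex $v$ must lie in $K$: no vertex of $I$ can be a cut-vertex, since removing an independent-set vertex from a connected split graph leaves the clique $K$ (minus possibly nothing) intact and every remaining $I$-vertex still attached to $K$. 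Now $G - v$ disconnected means the vertices of $I$ split into those still reachable and those not; any vertex $w \in I$ that becomes disconnected from the rest must have had all its neighbors equal to $\{v\}$ (its neighbors lie in $K$, and $K \setminus \{v\}$ is still connected to the rest), i.e. $w$ is a pendent neighbor of $v$.

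The main obstacle I anticipate is being careful with the exact characterization of cut-vertices in split graphs and with small or degenerate cases (for instance, $K$ being a single vertex, or components of size one, or $G = K_2$), since the clean identity $k' = k$ requires that no cut-vertex of a split graph can arise without a pendent neighbor — this uses crucially that the clique $K$ remains connected after deleting any single vertex and that every $I$-vertex attaches to $K$. Once that structural fact is nailed down, the corollary is immediate: $meg(G) = n - k$.

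As a remark, I would also note the special instance: if $G$ is $2$-connected (so $k = 0$), this recovers that $2$-connected split graphs are MEG-extremal, consistent with the pattern for the other graph families in this section.
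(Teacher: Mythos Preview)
Your proposal is correct and follows essentially the same approach as the paper: the paper also partitions $V(G)$ into the clique $C$ and the singletons of $I$ to exhibit $G$ as a well-partitioned chordal graph, invokes Theorem~\ref{thm:extwpchordalgraphs}, and then asserts that in a split graph the cut-vertices are exactly the vertices with a pendent neighbor. The only difference is that you spell out the verification of this last assertion (and of the bag axioms) in more detail than the paper does.
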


\begin{proof}
    Let $G$ be a split graph on $n$ vertices with $k$ vertices having a pendent neighbor. Suppose that 
    $G$ can be partitioned into a clique $C$ and an independent set $I$. Now consider $C$ as one bag, and each vertex of $I$ as a bag. This shows that $G$ is a well-partitioned graph as well. The proof follows from Theorem~\ref{thm:extwpchordalgraphs}  observing that the only cut-vertices in a split graph are those having a pendent neighbor. 
\end{proof}

An \textit{interval graph} $G$ is graph whose vertices correspond to intervals, and two vertices are adjacent if and only if their corresponding intervals intersect. Moreover, if it is possible to assign an interval to each vertex of $G$ in such a way that none of the intervals contains another interval, then $G$ is a \textit{proper interval graph}.

\begin{theorem}\label{thm:extpropintervalgraphs}
    Let $G$ be a proper interval graph with $k$ 
    cut-vertices. If $G$ has $n$ vertices, 
    then $meg(G)=n-k$.
\end{theorem}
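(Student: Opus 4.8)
The plan is to mimic the argument used for well-partitioned chordal graphs (Theorem~\ref{thm:extwpchordalgraphs}), relying on the characterization in Theorem~\ref{thm meg full} together with Lemma~\ref{lemma:simpl} and Lemma~\ref{lemma:curv}. It suffices to show that every vertex $v$ of $G$ that is neither a cut-vertex nor a simplicial vertex lies in every MEG-set, since then the only vertices possibly missing from a minimum MEG-set are the $k$ cut-vertices (simplicial vertices are forced in by Lemma~\ref{lemma:simpl}, cut-vertices are excluded from some minimal MEG-set by Lemma~\ref{lemma:curv}, and one checks that in a proper interval graph deleting all cut-vertices still leaves an MEG-set). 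So fix such a vertex $v$ and fix a proper interval representation; by Theorem~\ref{thm meg full} it is enough to exhibit a neighbor $u$ of $v$ such that every induced $2$-path $uvx$ lies on a $4$-cycle.

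\textbf{Key steps.} First I would use the standard fact that a proper interval graph admits a vertex ordering $v_1, v_2, \ldots, v_n$ (the left-endpoint order) that is simultaneously a consecutive ordering: for every $v_i$, its closed neighborhood $N[v_i]$ is a set of consecutive vertices in the ordering. Write $v = v_i$ and let $N[v_i] = \{v_a, v_{a+1}, \ldots, v_b\}$ with $a \le i \le b$. Since $v$ is not simplicial, $N[v_i]$ is not a clique, which (again by the consecutiveness property) forces $a < i < b$ is not immediately enough, but it does force that $v$ has a neighbor on its left and a neighbor on its right, or at least that the interval $[a,b]$ is "wide". I would then take $u := v_a$ (the neighbor of $v$ whose interval starts earliest). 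Now suppose $uvx$ is an induced $2$-path, so $x \in N(v) \setminus N[u]$; since $x$ is adjacent to $v_i$ we have $x = v_c$ for some $c \in [a,b]$, and since $x$ is not adjacent to $u = v_a$, consecutiveness of $N[v_a]$ forces $c$ to lie strictly to the right of $N[v_a]$, in particular $c > a+1$ is not quite the point either --- the clean statement is: $v_c \notin N[v_a]$ means $v_a$'s neighborhood ends before $v_c$. The goal is to produce $w$ adjacent to both $u$ and $x$ with $w \ne v$, giving the $4$-cycle $uvxwu$. Here I would exploit that $v$ is \emph{not a cut-vertex}: $G - v$ is connected, so there is a $u$–$x$ path in $G - v$; taking a shortest such path and using the interval structure (any induced path in an interval graph has length at most the clique-structure allows) one shows the path has length~$2$, i.e. there is a common neighbor $w \ne v$ of $u$ and $x$. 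Concretely, along the left-endpoint order, the vertex $v_{\max}$ of largest index in $N[v_a] \cap N[v_c]$-type considerations should serve; the monotonicity of neighborhoods in a proper interval order makes "$u$ and $x$ have a common neighbor other than $v$ whenever $G-v$ is connected" fairly direct.

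\textbf{Main obstacle.} The delicate point is the case analysis ensuring that the common neighbor $w$ we extract is genuinely distinct from $v$ and that $uvxwu$ is actually a $4$-cycle (i.e. $u \ne x$, which holds since $uvx$ is an induced $2$-path). The non-cut-vertex hypothesis is exactly what rules out the bad configuration where the only $u$–$x$ connection passes through $v$; translating "$G-v$ connected" into "there is a length-$2$ detour" requires invoking that in a (proper) interval graph, consecutiveness of neighborhoods means that if $v_a$ and $v_c$ (with $a<c$, both neighbors of $v_i$) have no common neighbor besides $v_i$, then $v_i$ separates them. I would make this precise by noting: if $w \in N(v_a) \cap N(v_c)$ then $w$ is consecutive-constrained, and the set of such $w$ is an interval of the order containing $v_i$; if this interval is $\{v_i\}$ only, every $v_a$–$v_c$ path must use $v_i$, contradicting that $v$ is not a cut-vertex. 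Hence a common neighbor $w \ne v$ exists, completing the $4$-cycle and the proof. The remaining bookkeeping --- that $\{v_1,\dots,v_n\}$ minus the cut-vertices is an MEG-set, so that $meg(G) = n-k$ and not merely $\ge n-k$ --- follows because simplicial vertices together with the remaining vertices already monitor all edges, which one verifies directly using that cut-vertices in a proper interval graph are precisely the vertices whose removal disconnects a "path-like" block decomposition.
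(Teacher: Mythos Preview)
Your approach has a genuine gap: the choice $u := v_a$ (the leftmost neighbor of $v$) does not work, and the key claim ``if $N(v_a)\cap N(v_c)=\{v\}$ then every $v_a$--$v_c$ path uses $v$'' is false in proper interval graphs. Concretely, take the proper interval representation $v_a=[0,3]$, $p=[1,5]$, $v=[2,7]$, $q=[4,9]$, $x=[6,12]$. Then $v$ is neither simplicial (since $v_a\not\sim x$) nor a cut-vertex (since $v_a\,p\,q\,x$ is a path in $G-v$). Your $u$ is $v_a$, and $v_a v x$ is an induced $2$-path; but the only common neighbor of $v_a$ and $x$ is $v$ itself, so there is no $4$-cycle through $v_a v x$. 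The path $v_a\,p\,q\,x$ bridges the gap between $v_a$ and $x$ with two short intervals, neither of which spans the whole gap---so ``common neighbor $=\{v\}$'' does not force $v$ to separate $v_a$ from $x$.

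The paper avoids this by choosing $u$ more carefully: not the leftmost neighbor of $v$, but the \emph{rightmost left-neighbor} of $v$ (the maximum of $L_v=\{y\in N(v):y\prec v\}$ in the left-endpoint order). With that choice, for any induced $2$-path $uvx$ one has $u\prec v\prec x$, and taking an induced $u$--$x$ path $u\,w_1\cdots w_\ell\,x$ in $G-v$, the maximality of $u$ forces $v\prec w_1$; then a short interval computation shows $w_1$ is adjacent to $x$, so $u v x w_1$ is the desired $4$-cycle. In the example above the paper's $u$ is $p$, and indeed $p v x q$ is a $4$-cycle. So your overall strategy (verify the hypothesis of Theorem~\ref{thm meg full} for every non-simplicial non-cut vertex) is the right one, but the specific neighbor you pick must be the one closest to $v$ in the interval order, not the extreme one.
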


\begin{proof}
   Let $G$ be a proper interval graph on $n$ vertices with $k$ cut-vertices. For any vertex $z$ in $G$, let us denote its corresponding interval by 
   $[l_z, r_v]$ where $l_z$ denotes the left endpoint of the interval and $r_z$ denotes the right endpoint of the interval. As $G$ is a proper interval graph, it is possible to provide a total ordering $\prec$ on the vertices of $G$ by defining 
   $z \prec z'$ if $l_z < l_{z'}$. Observe that, as $G$ is a proper interval graph, $l_z < l_{z'}$ if and only if $r_z < r_{z'}$.

   Let $v$ be a vertex of $G$ which is neither a cut-vertex nor a simplicial vertex. It is enough to show that there exists a neighbor $u$ of $v$ such that any induced $2$-path of the form $uvx$ is part of a $4$-cycle due to Lemmas~\ref{lemma:simpl} and~\ref{lemma:curv}.
    
  As $v$ is not a simplicial vertex, $v$ has neighbors 
 which intersects $l_v$, 
 and neighbors which intersects $r_v$. 
 Let $L_v$ denote the set of all neighbors $y$ of $v$ 
 such that $y \prec v$. Let $u$ be the maximum element of $L_v$ with respect to $\prec$. 
We will show that, any induced $2$-path of the form $uvx$ must be part of a $4$-cycle. 

Let $uvx$ be an induced $2$-path. Notice that we must have $u \prec v \prec x$. As $v$ is not a cut-vertex, there exists an induced path $P = uw_iw_2\ldots w_{\ell}x$ connecting $u$ and $x$ which does not contain the vertex $v$. Notice that as $P$ is an induced path, we must have 
$$u \prec w_1 \prec w_2 \cdots \prec w_{\ell} \prec x.$$ 
That means, every $w_i$ is a neighbor of $v$. Moreover, due to the maximality of $u$, we must have $v \prec w_1$. 
That means $w_1$ intersects $r_v$, and hence intersects $[l_x,r_x]$. Hence, $uvxw_1u$ is a $4$-cycle. 
\end{proof}

The results proved in this section together imply the following theorem:

\begin{theorem}\label{th:2-connected is extremal}
    If $G$ is a $2$-connected cograph, block-graph, well-partitioned chordal graph, split graph, or  proper interval graph, then $G$ is MEG-extremal.
\end{theorem}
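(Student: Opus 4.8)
The plan is to derive Theorem~\ref{th:2-connected is extremal} as an immediate consequence of the four family-specific theorems already established in this subsection, namely Theorems~\ref{thm:extcographs}, \ref{thm:extblockgraphs}, \ref{thm:extwpchordalgraphs}, and~\ref{thm:extpropintervalgraphs}, together with the trivial observation that a $2$-connected graph has no cut-vertex. First I would fix a $2$-connected graph $G$ on $n$ vertices belonging to one of the listed families; the goal is to show $meg(G) = n$, i.e.\ that every vertex lies in every MEG-set. Since $G$ is $2$-connected, it has no cut-vertex, and by the hypothesis $G$ is not $K_1$ or $K_2$ (a $2$-connected graph on at most two vertices does not exist in the relevant sense, as $K_2$ is usually not considered $2$-connected, and in any case $K_1,K_2$ give the degenerate conclusion), so the cut-vertex count $k$ in each theorem is $0$.

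Then I would go case by case. If $G$ is a cograph, Theorem~\ref{thm:extcographs} says $G$ either has a cut-vertex with $meg(G) = n-1$ or else $meg(G) = n$; since $G$ has no cut-vertex, $meg(G) = n$. If $G$ is a block graph, a well-partitioned chordal graph, or a proper interval graph, the corresponding theorem gives $meg(G) = n - k$ where $k$ is the number of cut-vertices, and again $k = 0$ forces $meg(G) = n$. For split graphs, I would either invoke the corollary following Theorem~\ref{thm:extwpchordalgraphs} (noting that in a $2$-connected graph no vertex has a pendent neighbor, so $k=0$ there too), or simply observe that a $2$-connected split graph is a special case of a $2$-connected well-partitioned chordal graph, as shown in that corollary's proof. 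In every case we conclude $meg(G) = |V(G)|$, which by definition means $G$ is MEG-extremal.

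The only thing requiring a sentence of care is the boundary case: the statements of Theorems~\ref{thm:extblockgraphs}, \ref{thm:extwpchordalgraphs}, and~\ref{thm:extpropintervalgraphs} do not explicitly exclude $K_1$, but for $G = K_1$ the claim $meg(G) = |V(G)|$ is vacuously handled (or one simply assumes $n \geq 2$, which is implicit once $G$ is $2$-connected with at least one edge), and $K_2$ is not $2$-connected under the standard convention, so no genuine exception arises. I do not anticipate any real obstacle here: the theorem is purely a corollary-style repackaging, and all the substantive work — the use of Theorem~\ref{thm meg full} to show that non-cut, non-simplicial vertices satisfy the $4$-cycle condition inside each family — has already been carried out in the preceding proofs. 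The proof I would write is therefore just three or four lines quoting those theorems with $k=0$.
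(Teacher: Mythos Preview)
Your proposal is correct and matches the paper's own treatment exactly: the paper presents this theorem without a separate proof, simply stating that ``the results proved in this section together imply the following theorem,'' which is precisely your plan of invoking Theorems~\ref{thm:extcographs}, \ref{thm:extblockgraphs}, \ref{thm:extwpchordalgraphs}, \ref{thm:extpropintervalgraphs} (and the split-graph corollary) with $k=0$ cut-vertices.
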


\begin{remark}
    As split graphs are a subclass of well-partitioned chordal graphs, it was not necessary to mention them separately in Theorem~\ref{th:2-connected is extremal}. However, we still do so since split graphs are a well-known class of graphs.
\end{remark}

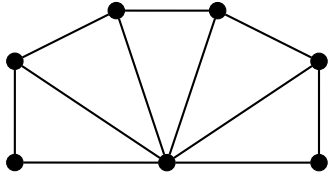
\begin{figure}
    \centering
    \begin{tikzpicture}[
      node distance = 8mm and 8mm,
state/.append style = {circle, draw, fill, 
                       minimum size=0.1em},
every edge/.style = {draw, thick}, transform shape, scale=0.7] 

\begin{scope}[nodes=state, scale=.8]
\node (n1)  {};
\node (n2)  [right=of n1, fill=white, draw=white]   {};
\node (n3)  [right=of n2, fill=white, draw=white]   {};
\node (n4)  [right=of n3]   {};
%
\node (n5)  [above=of n4,fill=white, draw=white]   {};
\node (n6)  [above =of n5]   {};
\node (n7)  [left=of n6, fill=white, draw=white]   {};
\node (n8)  [left=of n7,fill=white, draw=white]   {};
%
\node (n9)  [above=of n8]   {};
\node (n10)  [left=of n9, fill=white, draw=white]   {};
\node (n11)  [left=of n10]   {};
\node (n12)  [below=of n11, fill=white, draw=white]   {};
\node (n13)  [left=of n12, fill=white, draw=white]   {};
\node (n14)  [left=of n13]   {};

\node (n15)  [below=of n14, fill=white, draw=white]   {};
\node (n16)  [below=of n15]   {};
\end{scope}
\path   (n1) edge  (n4)
(n1) edge  (n6)
(n1) edge  (n9)
(n1) edge  (n11)
(n1) edge  (n14)
(n1) edge  (n16)
(n4) edge  (n6)
(n6) edge  (n9)
 (n9) edge  (n11)
(n11) edge  (n14)
(n14) edge  (n16)
;

\end{tikzpicture}

    \caption{A $2$-connected interval graph with a non-extremal MEG-set. Note that the set of all vertices except the universal vertex is an MEG-set.}
    \label{fig:chordal-counterexemple}
\end{figure}

\begin{remark}
   A natural question to ask  is whether we can include superclasses of the graph families mentioned in 
   Theorem~\ref{th:2-connected is extremal}. In Figure \ref{fig:chordal-counterexemple} we provide an example of 
   a $2$-connected interval graph which is not MEG-extremal. 
   As interval graphs are also chordal graphs, we cannot hope to extend Theorem~\ref{th:2-connected is extremal} to the family of interval graphs or chordal graphs. 
\end{remark}

\subsection{Graph products}
Let $G$ and  $H$ be two  graphs. Now we are going to define 
three (product) graphs, each on the set of vertices 
$$V(G) \times V(H) = \{(a, b): a \in V(G) \text{ and } b \in V(H) \}.$$

The \textit{strong product} of $G$ and $H$, denoted by $G \boxtimes H$, has the following adjacency rule: two vertices $(a, b)$ and 
$(a', b')$ are adjacent if and only if 
one of the following conditions are satisfied: (i) $a = a'$ and $bb' \in E(H)$, 
(ii) $aa' \in E(G)$ and $b=b'$, 
(iii) $aa' \in E(G)$ and $bb' \in E(H)$.

The \textit{Cartesian product} of $G$ and $H$, denoted by $G ~\Box~ H$, has the following adjacency rule: two vertices $(a, b)$ and 
$(a', b')$ are adjacent if and only if 
either $a = a'$ and $bb' \in E(H)$,
or $aa' \in E(G)$ and $b=b'$.

The \textit{tensor product} of $G$ and $H$, denoted by $G \times H$, has the following adjacency rule: two vertices $(a, b)$ and 
$(a', b')$ are adjacent if and only if 
$aa' \in E(G)$ and $bb' \in E(H)$.

Our next result provides a shorter 
proof of Corollary~2 of~\cite{haslegrave2023monitoring}.

\begin{theorem}[\cite{haslegrave2023monitoring}]
    Let $G$ be an MEG-extremal graph and let $H$ be any graph. Then their 
    Cartesian product $G~\Box~H$ and strong product  
    $G\boxtimes H$ are both MEG-extremal. 
\end{theorem}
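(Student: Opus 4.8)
The plan is to use the characterization from Theorem~\ref{thm meg full}: to show that $G \Box H$ (respectively $G \boxtimes H$) is MEG-extremal, I must verify that every vertex $(a,b)$ satisfies the condition, i.e. there is a neighbor $(a',b')$ such that every induced $2$-path through $(a,b)$ starting at $(a',b')$ lies on a $4$-cycle. Since $G$ is MEG-extremal, every vertex $a$ of $G$ has a witness neighbor $u \in N_G(a)$ such that every induced $2$-path $u a x$ in $G$ lies on a $4$-cycle $uaxwu$. The natural choice of witness for $(a,b)$ in the product is to use the $G$-layer: take the neighbor $(u, b)$, where $u$ is the $G$-witness of $a$.

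The key steps, in order: First I would set $(a',b') = (u,b)$ and consider an arbitrary induced $2$-path $(u,b)(a,b)(x',y')$ in the product. The third vertex $(x',y')$ is a neighbor of $(a,b)$ but non-adjacent to $(u,b)$. I would then do a case analysis on the type of the edge $(a,b)(x',y')$ according to the product's adjacency rule. If the edge is ``horizontal'' ($y' = b$, $a x' \in E(G)$), then $u a x'$ is an induced $2$-path in $G$ (non-adjacency of $(u,b)$ and $(x',b)$ forces $u x' \notin E(G)$), so it lies on a $4$-cycle $u a x' w u$ in $G$, and then $(u,b)(a,b)(x',b)(w,b)(u,b)$ is the desired $4$-cycle in the product. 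If the edge is ``vertical'' ($x' = a$, $b y' \in E(H)$) — this case occurs in both $\Box$ and $\boxtimes$ — then I claim $(u,b)(a,b)(a,y')$ lies on the $4$-cycle $(u,b)(a,b)(a,y')(u,y')(u,b)$: indeed $(u,b)(u,y')$ is an edge since $b y' \in E(H)$, and $(u,y')(a,y')$ is an edge since $u a \in E(G)$ (valid in both products). For the strong product there is a third (``diagonal'') case $a x' \in E(G)$ and $b y' \in E(H)$; here non-adjacency of $(u,b)$ and $(x',y')$ in $\boxtimes$ forces $u x' \notin E(G)$, so again $u a x'$ is an induced $2$-path in $G$ lying on $uax'wu$, and I would close it with $(u,b)(a,b)(x',y')(w, \cdot)(u,b)$ — choosing the second coordinate of the fourth vertex to make both edges valid; since $w$ is adjacent to both $u$ and $x'$ in $G$, taking $(w, b)$ works if $b y' \in E(H)$ lets $(x',y')(w,b)$ be an edge (diagonal edge, valid), and $(w,b)(u,b)$ is horizontal, valid. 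Finally, I must also handle the degenerate case where $(a,b)$ has no induced $2$-path starting at $(u,b)$ at all, in which case the condition is vacuously satisfied.

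I expect the main obstacle to be the bookkeeping in the strong-product diagonal case: one must pick the second coordinate of the closing vertex $w$ of the $4$-cycle carefully so that both incident edges respect the $\boxtimes$ adjacency rule, and check that the resulting $4$-cycle is genuine (four distinct vertices, with $(u,b)$ not adjacent to $(x',y')$ as required for it to ``complete'' the induced $2$-path rather than create a chord). A minor subtlety is that ``lies on a $4$-cycle'' in the statement of Theorem~\ref{thm meg full} only requires the $2$-path to extend to some $4$-cycle, not an induced one, so I have some freedom; I would exploit that freedom to always route the closing vertex through the $b$-layer whenever $w$ exists in $G$, which keeps all the closing edges of horizontal or diagonal type and avoids further case splits. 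The Cartesian case is strictly easier since only the horizontal and vertical subcases arise.
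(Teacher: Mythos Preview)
Your proposal is correct and follows essentially the same approach as the paper: choose the witness $(u,b)$ in the $G$-layer using the $G$-witness $u$ of $a$, then do the same horizontal/vertical (and, for $\boxtimes$, diagonal) case split, closing the $4$-cycle via $(w,b)$ or $(u,y')$ exactly as the paper does. The only cosmetic difference is that in the diagonal case the paper closes with $(w,y')$ rather than your $(w,b)$, but both choices work for the same reason.
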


\begin{proof}
    We prove this equality for the Cartesian product $G ~\Box~ H$ first. 
    Let $(a, b)$ be a vertex of $G~\Box~ H$. 
    Since $G$ is MEG-extremal, 
    by Theorem \ref{thm meg full}, for the vertex $a \in V(G)$, there exists $u \in V(G)$ such that every induced $2$-path $uax$ is part of a $4$-cycle. We will show that, all induced 
    $2$-path of the form $(u,b)(a,b)(x,y)$ 
    is part of a $4$-cycle in $G ~\Box~ H$. 
    Since $(a,b)$ is adjacent to $(x,y)$, we have the following cases:
    \begin{itemize}
        \item Let $ax \in E(G)$ and $b=y$. 
        First notice that if $(u, b)(a, b)(x, y)$ is an induced $2$-path, $(u, b)$ and $(x, y)$ are not adjacent. Since $b=y$, the vertices $x, u$ are not adjacent in $G$. 
        Thus, $uax$ is an induced $2$-path in $G$. This implies that there exists a $w \in V(G)$ such that 
        $uaxwu$ is a $4$-cycle in $G$. 
        Hence, observe that $(w, b)$
        is adjacent to both 
        $(u, b)$ and $(x, y)$. 
        
        \item Let $a=x$ and $by \in E(H)$. 
        In this case note that the vertex $(u,y)$
        is adjacent to both $(u,b)$ and $(x,y)$. 
    \end{itemize}

     Thus, by Theorem \ref{thm meg full}, $(a,b)$ is in every MEG-set of $G~\Box~ H$, and hence $G ~\Box~ H$ is MEG-extremal.

    For proving that the strong product $G \boxtimes H$ is also MEG-extremal, we may consider the above cases here. But the second case is redundant for strong product of graphs as $(u,b)$ is adjacent to $(a,y)$ and the path is no longer induced. That apart, we need to consider a third case, which is as follows. 
    \begin{itemize}
        \item Let $ax\in E(G)$ and $by\in E(G)$. We know that there exists a vertex  $w\in V(G)$ such that $uaxw$ is a $4$-cycle in $G$. Then observe that 
        $(w,y)$ is adjacent to both $(u,b)$ and $(x,y)$.
    \end{itemize}

    Thus, by Theorem \ref{thm meg full}, $(a,b)$ is in every MEG-set of $G\boxtimes H$, and hence $G \boxtimes H$ is MEG-extremal.
   
\end{proof}

\begin{theorem}
    Let $G, H$ be two MEG-extremal graphs. Then their tensor product $G \times H$ is also 
    MEG-extremal. 
\end{theorem}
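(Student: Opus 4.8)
The plan is to apply the characterization from Theorem~\ref{thm meg full}: I must show that every vertex $(a,b)$ of the tensor product $G \times H$ lies in every MEG-set. Fix such a vertex $(a,b)$. Since $G$ is MEG-extremal, by Theorem~\ref{thm meg full} there is a neighbor $u$ of $a$ in $G$ such that every induced $2$-path $uax$ in $G$ lies on a $4$-cycle; similarly, since $H$ is MEG-extremal, there is a neighbor $p$ of $b$ in $H$ such that every induced $2$-path $pby$ in $H$ lies on a $4$-cycle. The natural candidate for the special neighbor of $(a,b)$ is $(u,p)$, which is indeed adjacent to $(a,b)$ because $ua \in E(G)$ and $pb \in E(H)$. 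I then need to check that every induced $2$-path $(u,p)(a,b)(x,y)$ in $G \times H$ lies on a $4$-cycle.

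So suppose $(u,p)(a,b)(x,y)$ is an induced $2$-path in $G \times H$. Adjacency of $(a,b)$ and $(x,y)$ gives $ax \in E(G)$ and $by \in E(H)$. Non-adjacency of $(u,p)$ and $(x,y)$ means that we do \emph{not} have both $ux \in E(G)$ and $py \in E(H)$; that is, $ux \notin E(G)$ or $py \notin E(H)$. I would split into two cases. If $ux \notin E(G)$, then $uax$ is an induced $2$-path in $G$ (it is a path since $ua, ax \in E(G)$, and induced since $ux \notin E(G)$), so by the choice of $u$ it lies on a $4$-cycle $uaxwu$ in $G$, meaning $uw, wx \in E(G)$. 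Then $(w,y)$ is adjacent to $(x,y)$ since $wx \in E(G)$ and $by \in E(H)$ — wait, I need $(w,y)$ adjacent to $(x,y)$, which requires $wx \in E(G)$ and $yy \in E(H)$, and the latter fails. So the first coordinate alone does not suffice; I must move in both coordinates simultaneously.

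The correct construction is to build a $4$-cycle $(u,p)\,(a,b)\,(x,y)\,(s,t)\,(u,p)$ where $(s,t)$ is adjacent to both $(u,p)$ and $(x,y)$, i.e.\ $us, sx \in E(G)$ and $pt, ty \in E(H)$. In the case $ux \notin E(G)$: the induced $2$-path $uax$ in $G$ lies on a $4$-cycle, giving $w \in V(G)$ with $uw, wx \in E(G)$; and for the second coordinate I just need any common neighbor of $p$ and $y$ in $H$ — but $b$ itself is adjacent to both $p$ (since $pb \in E(H)$) and $y$ (since $by \in E(H)$), so $t = b$ works. Thus $(w,b)$ is adjacent to $(u,p)$ (as $uw \in E(G)$, $bp \in E(H)$) and to $(x,y)$ (as $wx \in E(G)$, $by \in E(H)$), closing the $4$-cycle. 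Symmetrically, in the case $py \notin E(H)$: the induced $2$-path $pby$ in $H$ lies on a $4$-cycle, giving $q \in V(H)$ with $pq, qy \in E(H)$, and $a$ is a common neighbor of $u$ and $x$ in $G$ (since $ua, ax \in E(G)$), so $(a,q)$ is adjacent to both $(u,p)$ and $(x,y)$. Either way, $(u,p)(a,b)(x,y)$ lies on a $4$-cycle, and since this holds for every such induced $2$-path, Theorem~\ref{thm meg full} forces $(a,b)$ into every MEG-set. As $(a,b)$ was arbitrary, $meg(G \times H) = |V(G \times H)|$, i.e.\ $G \times H$ is MEG-extremal.

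The main subtlety — and the thing I expect to be easy to get wrong on a first pass — is exactly this point that one cannot handle the two coordinates independently: the witness vertex closing the $4$-cycle must differ from $(a,b)$ in both coordinates at once, and the trick is that in whichever coordinate the $2$-path is already induced we invoke MEG-extremality to get a fresh vertex, while in the other coordinate we reuse the middle vertex ($b$ or $a$) as a common neighbor. One should also double-check the degenerate possibility that the resulting $4$-cycle could collapse (e.g.\ $w = a$), but $w \neq a$ since $uaxwu$ being a genuine $4$-cycle in $G$ forces $w \notin \{u,a,x\}$, and likewise for $q$; so $(w,b) \notin \{(u,p),(a,b),(x,y)\}$, and the $4$-cycle is nondegenerate. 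Beyond that, everything is a routine unwinding of the tensor-product adjacency rule.
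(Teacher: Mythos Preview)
Your approach is essentially the paper's: choose the special neighbors $u$ of $a$ in $G$ and $p$ of $b$ in $H$ via Theorem~\ref{thm meg full}, take $(u,p)$ as the special neighbor of $(a,b)$, and for an induced $2$-path $(u,p)(a,b)(x,y)$ split on which coordinate fails to give an edge between the endpoints. Your write-up is in fact cleaner than the paper's (which contains undefined symbols in its final case), and your explicit non-degeneracy check on the fourth vertex is a nice touch the paper omits.

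There is, however, a genuine gap shared by both arguments. When you assert ``if $ux \notin E(G)$, then $uax$ is an induced $2$-path in $G$,'' you are implicitly assuming $u \neq x$; but nothing rules this out, since $(u,p)$ and $(u,y)$ are automatically non-adjacent in the tensor product regardless of whether $py \in E(H)$. If $u = x$ and $py \in E(H)$ (so your second case does not apply either), closing the $4$-cycle would require a vertex $(s,t) \neq (a,b)$ with $s$ a neighbor of $u$ in $G$ and $t$ a common neighbor of $p$ and $y$ in $H$, and no such vertex need exist. Concretely, take $G = K_2$ and $H = K_3$, both MEG-extremal: their tensor product is the $6$-cycle $C_6$, for which $meg(C_6) = 3 < 6$. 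So the statement as written is actually false, and this degenerate-coordinate case is exactly where any purported proof must break down; a correct version would need an extra hypothesis (for instance, minimum degree at least~$2$ in both factors, so that the stalled coordinate can always be moved off $a$ or $b$).
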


\begin{proof}
    Let $(a,b)$ be a vertex of $G \times H$. As $G$ is MEG-extremal, there exists a $u \in V(G)$ such that any induced $2$-path of the form $uax$ is part of a $4$-cycle
    $uaxwu$ in $G$. 
    Similarly, as $H$ is MEG-extremal, 
    there exists a $v \in V(H)$ such that any induced $2$-path of the form $vby$ is part of a $4$-cycle
    $vbyzv$ in $H$.

    Let us consider an induced $2$-path of the form $(u,v)(a,b)(x,y)$ in $G \times H$. Notice that at least one of $uax$ and $vby$ is an induced $2$-path in $G$ or $H$, respectively. If both $uax$ and $vby$ are induced $2$-paths, then due to the observations noted in the last paragraph, there exists $w$ that is adjacent to  both $u,x$ in $G$ and there exists $z$ that is adjacent to both $v,y$ in $H$. Hence, $(w,z)$ is adjacent to both $(u,v)$ and $(x,y)$. Otherwise, without loss of generality, we assume that $uax$ is an induced two path and consider $w$ adjacent to  both $u,x$ in $G$, and observe that $(w,b)$ is adjacent to both $(u, v)$ and $(x, y)$, which concludes the proof.
\end{proof}

\begin{remark}
One can check that the tensor product of an MEG-extremal graph with any graph does not always yield an MEG-extremal graph. For example, the tensor product of the MEG-extremal graph $K_2$ and the path $P_3$ on $3$ vertices (see Figure \ref{fig:tensor}) is not MEG-extremal.
\end{remark}
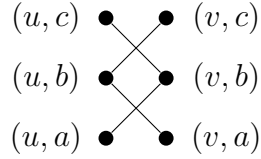
\begin{figure}
    \centering
    \begin{tikzpicture}[scale=.8, inner sep=.7mm]
        \node[black, circle, fill] (0) at (0, 0) []{};
        \node[black, circle, fill] (1) at (0, 1) []{};
        \node[black, circle, fill] (2) at (0, 2) []{};
        \node[black, circle, fill] (3) at (1, 0) []{};
        \node[black, circle, fill] (4) at (1, 1) []{};
        \node[black, circle, fill] (5) at (1, 2) []{};
        \node[] () at (-1, 0) []{$(u,a)$};
        \node[] () at (-1, 1) []{$(u,b)$};
        \node[] () at (-1, 2) []{$(u,c)$};
        \node[] () at (2, 0) []{$(v,a)$};
        \node[] () at (2, 1) []{$(v,b)$};
        \node[] () at (2, 2) []{$(v,c)$};
        \draw[](0)--(4){};
        \draw[](2)--(4){};
        \draw[](3)--(1){};
        \draw[](5)--(1){};
    \end{tikzpicture}
    \caption{The tensor product of a $K_2$ graph $uv$ and the $P_3$ graph $abc$. Here, the set of degree 1 vertices is an MEG-set.}
    \label{fig:tensor}
\end{figure}

\section{Graphs of given girth}\label{sec graph parameters}

Due to Remark~\ref{rem no deg 1 in minMEG}, it makes sense to study connected graphs with minimum degree $2$. 
In Corollary~\ref{cor girth 5}, we noted that, if $G$ has girth $5$ or more, then $G$ cannot be MEG-extremal. Therefore, it is natural to wonder whether $meg(G)$ will become even smaller (with respect to the order of $G$) if $G$ becomes sparser. 
One way to consider sparse graphs  is to study graphs having high girth. The following bound is meaningful for graphs of girth at least~8. Note that, as mentioned in the introduction, an incorrect version of this bound appeared in the conference version of this paper~\cite{FMMSST24}.

\begin{theorem}\label{thm:large_girth}
   Let $G$ be a $2$-connected graph with girth $g$ at least $4$. If $G$ has $n$ vertices, then $meg(G) \leq \frac{4n}{g-3}.$
\end{theorem}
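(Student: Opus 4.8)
The plan is to construct an explicit MEG-set $M$ of size at most $\frac{4n}{g-3}$ by choosing a sparse set of "hub" vertices so that every edge is forced to lie on a unique shortest path between two hubs. The natural object to build is a spanning structure analogous to a BFS forest, but the key leverage comes from the girth hypothesis: in a graph of girth $g$, any two vertices at distance $d < g/2$ are joined by a \emph{unique} shortest path, and more importantly, local balls of radius roughly $g/2$ are trees. First I would fix a maximal set $S$ of vertices that are pairwise at distance at least $\lfloor (g-1)/2 \rfloor$ (a "distance-$(g-1)/2$ packing", chosen greedily). By maximality, the balls of radius $\lfloor (g-1)/2 \rfloor - 1$ around the vertices of $S$ cover $V(G)$, so $|V(G)| \le \sum_{s \in S} |B_{\lfloor(g-1)/2\rfloor - 1}(s)|$; since each such ball is a tree (girth $> 2\cdot$radius), I can lower-bound its size only trivially, which is the wrong direction, so instead I would bound $|S|$ from \emph{above} by a volume argument: the balls of radius $\lfloor(g-1)/2\rfloor/2$ around distinct $s\in S$ are disjoint and each has size at least... — and here I pause, because in a $2$-connected graph minimum degree is $2$ so these trees have size at least linear in the radius, giving $|S| = O(n/g)$.

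\textbf{Forcing edges to be monitored.} Once $S$ is pinned down with $|S| \le \frac{cn}{g}$ for an appropriate constant, I would take $M$ to be $S$ together with, for each vertex of $S$, a bounded number of extra vertices in its neighborhood (chosen to "anchor" the local tree so every incident edge is monitored), aiming for $|M| \le \frac{4n}{g-3}$ after optimizing constants. The verification that $M$ is an MEG-set runs edge-by-edge: given an edge $e = xy$, pick $s \in S$ within distance $\lfloor(g-1)/2\rfloor - 1$ of, say, $x$; inside the radius-$\lfloor(g-1)/2\rfloor$ ball $B$ around $s$, which is a tree, $e$ is a cut-edge of $B$, so the two "sides" of $e$ in $B$ each contain a leaf of the local tree, and those leaves (or the $M$-vertices anchoring them) have the property that every shortest path between them passes through $e$ — because any path avoiding $e$ would have to leave $B$ and return, creating a cycle of length $< g$, contradiction. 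This is the heart of the argument and uses $2$-connectedness to guarantee the tree $B$ is "thick enough" that the anchoring vertices can be placed, and uses girth to rule out alternate routes.

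\textbf{Main obstacle.} The delicate point is the bookkeeping: ensuring that the number of "extra" anchor vertices per element of $S$ is genuinely bounded (ideally by something like $3$ or $4$, to land on the stated $\frac{4n}{g-3}$ rather than a worse constant), and simultaneously that the packing $S$ is large enough to reach \emph{every} edge but small enough for the count. I expect the tight bound $\frac{4n}{g-3}$ to emerge from choosing the packing radius to be exactly $\lceil (g-3)/2 \rceil$ or similar, so that the "$g-3$" in the denominator is literally (twice) the packing radius, and the factor $4$ absorbs both the $2$ from halving and a factor $2$ from anchoring each tree-side of a cut-edge. The risk is that $2$-connectedness alone gives only minimum degree $2$, so the local trees could be bare paths, making the volume bound $|B_r(s)| \ge r+1$ rather than exponential; fortunately $r+1$ linear growth is exactly what is needed for an $O(n/g)$ bound, so the argument should still close, but the constant-chasing to hit precisely $\frac{4}{g-3}$ will require care — in particular handling the small-girth boundary cases $g = 4, 5, 6, 7$ where the bound is weak or vacuous and can perhaps be deferred to Corollary~\ref{cor girth 5} or checked directly.
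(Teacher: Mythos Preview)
Your high-level instinct---a greedy distance packing for the hubs, then girth-forced uniqueness of shortest paths---is exactly the paper's approach, but your packing radius is off by a factor of two, and this breaks the monitoring step.

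You take $r \approx (g-1)/2$ so that the ball $B_r(s)$ around each hub is a tree, and you want to monitor an edge $e$ by a pair of leaves (or anchors near them) on opposite sides of $e$ in that tree. But two such leaves can sit at tree-distance up to $2r \approx g-1$ from each other. The girth bound only tells you that any alternative path between them has length at least $g - d$, where $d$ is the tree-path length; when $d$ can be as large as $g-1$, this gives nothing, so the pair need not monitor $e$. Your sentence ``any path avoiding $e$ would have to leave $B$ and return, creating a cycle of length $< g$'' is precisely where the argument fails: the cycle you create has length $\geq g$, not $< g$, and that is perfectly compatible with the detour being shorter than your tree path. (A secondary issue: for $g$ odd and $r = (g-1)/2$, a $g$-cycle through $s$ fits entirely inside $B_r(s)$, so the ball is not even a tree.)

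The paper's fix is to \emph{halve} the radius and drop the anchors entirely. Take $M$ to be a maximal set of vertices at pairwise distance at least $\tfrac{g-3}{4}$. Then every vertex is within distance $< \tfrac{g-3}{4}$ of $M$, so for any edge $uv$ the nearest hubs $u', v' \in M$ give a concatenated path $u'\cdots u\,v\cdots v'$ of length at most $\tfrac{g-3}{4} + 1 + \tfrac{g-3}{4} = \tfrac{g-1}{2} < \tfrac{g}{2}$. By the girth hypothesis this is the \emph{unique} shortest $u'$--$v'$ path, so $M$ itself is already an MEG-set; the whole anchor-vertex layer is unnecessary. The count then goes essentially as you sketched: balls of radius $\ell \geq \tfrac{g-7}{8}$ around the vertices of $M$ are pairwise disjoint, and $2$-connectedness puts each hub on a cycle, forcing each ball to have size at least $2\ell + 1 \geq \tfrac{g-3}{4}$; hence $|M| \leq \tfrac{4n}{g-3}$. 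The constant $4$ in the numerator thus comes from halving \emph{twice}---once to make the monitoring path shorter than $g/2$, and once more to get disjoint balls for the volume bound---not from a hub-plus-anchors accounting.
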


\begin{proof}
   Let $G$ be a $2$-connected graph on $n$ vertices with girth $g \geq 4$. We construct a vertex subset $M$ of 
   $G$ recursively, and claim that $M$ is an MEG-set of $G$. 
   To begin with, we initialize $M$ by picking an arbitrary vertex of $G$. Next, we add an arbitrary vertex to $M$ that is at distance at least 
   $\frac{g - 3}{4}$ from any vertex of $M$. 
   We repeat this process until every vertex of 
   $V(G) \setminus M$ are at a distance strictly less than $\frac{g - 3}{4}$ from some vertex of $M$. We want to note that $M$ can also be obtained by applying a greedy algorithm to find a maximal independent set in the $\lceil(\frac{g-3}{4})-1\rceil$-th power of $G$ when the girth is large enough.

   Next, we will show that $M$ is 
   indeed an MEG-set of $G$. Let $uv$ be an arbitrary edge of $G$. Note that, the distance between $u$ (resp., $v$) and the set $M$ is strictly less than $\frac{g-3}{4}$ due to the way we constructed $M$. 
   Without loss of generality, we consider $u'$ the nearest vertex to $u$ in $M$, such that $d_G(u, u') \leq \frac{g-3}{4}$ and $d_G(v, u')>d_G(u, u')$. If the vertex of $M$ nearest to $u$ is closer to $v$, we simply swap the place of $u$ and $v$. We consider the vertex of $M$ closest to $v$ in $G-uv$. Since $G$ is $2$-connected, $G-uv$ is connected, hence $v'$ is always defined. Also observe that $d_G(v, v') \leq 2 \left(\frac{g-3}{4}\right) - d_G(u, u')$ otherwise there would be a shortest path from $u'$ to $v'$ on more than $\frac{g-3}{4}$ vertices where at least one vertex is at distance more than $\frac{g-3}{4}$ from any vertex of $M$. 

   Let $P_1$ be a shortest path connecting $u'$ and $u$, and $P_2$ be a shortest path connecting $v$ and $v'$ in $G-uv$. Let  $P$ be the path obtained by concatenating the path $P_1$, the edge $uv$, and the path $P_2$. Since neither of $P_1$ nor $P_2$ contains the edge $uv$, $P$ is non-backtracking. Note that the length of $P$ is at most 
   $$d(u', u) + 1 + 2\left(\frac{g-3}{4}\right) - d(u', u) = \frac{g-1}{2}.$$ Moreover, no vertex can appear twice in $P$ otherwise there is a cycle of length strictly less than $g$ in $G$. Therefore, $P$ is the unique shortest path 
   connecting $u'$ and $v'$, as otherwise it will imply that there is a cycle of length 
   strictly less than $g$, contradicting the girth condition on $G$. Hence, $u'v'$ monitors the edge $uv$.

   Now we are left with counting the 
   cardinality of $M$. 
For any vertex $u \in M$, let $S_u$ be the set of all vertices that are at distance at most $\ell$ from $u$, where $\ell$ is the biggest integer such that a vertex of $G$ is at distance $\ell$ or less from at most one vertex of $M$. Notice that, for any two vertices $u,v \in M$, $S_u \cap S_v = \emptyset$, hence a path connecting $u$ and $v$ will be of length at least $2 \ell+1$. 
As any two vertices of $M$ are at distance at least $\frac{g-3}{4}$ from one another, then
$$2 \ell +1 \geq \frac{g-3}{4} \implies \ell \geq \frac{g-7}{8}.$$
As $G$ is a $2$-connected graph, each vertex $v$ of $M$ is part of a cycle which implies that $|S_v| \geq 2 \ell +1$ (counting $v$ itself too).
Therefore, we must have 
$$n = |V(G)| \geq |M|(2 \ell +1) 
\geq |M|\left(\frac{2(g-7)}{8} + 1\right)  
\implies |M| \leq \frac{4n}{g-3}.$$
This completes the proof.  
\end{proof}

\begin{remark}
    As the girth can be considered as a measure of sparseness of a graph, the above result shows that $meg(G)$ 
    has a stricter upper bound when the sparseness (in terms of the girth) of $G$ increases. However, the idea used in the proof is quite general and it may be possible to provide a better bound using the same idea for specific families of graphs, having more structural information. 
\end{remark}

\begin{theorem}\label{th girth 5 vertex cover}
    For $G$ a connected graph of minimum degree $2$ and girth at least $5$, any vertex cover of $G$ is an MEG-set.
\end{theorem}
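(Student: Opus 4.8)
The plan is to show that if $C$ is a vertex cover of $G$, then every edge $e = uv$ of $G$ is monitored by a suitable pair of vertices of $C$. Since $C$ is a vertex cover, at least one endpoint of $e$, say $u$, lies in $C$; in fact if both endpoints are in $C$ we will argue separately (and more easily), so the interesting case is when $v \notin C$, which forces all neighbors of $v$ to lie in $C$ (as $C$ covers every edge incident to $v$). I would first record this structural consequence: for any vertex $x \notin C$, the entire neighborhood $N(x)$ is contained in $C$, and moreover (since $C$ is independent's complement is independent) the set $V(G)\setminus C$ is an independent set, so no two vertices outside $C$ are adjacent.

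The heart of the argument will be to exploit the girth-$5$ condition together with minimum degree $2$. Given the edge $uv$ with $u \in C$, I want to produce $a, b \in C$ such that $uv$ lies on every shortest $a$–$b$ path. The natural candidates: take $a = u$ if $u$ has a neighbor structure forcing uniqueness, otherwise push one step further. Concretely, since $\deg(v) \geq 2$, pick a neighbor $u' \neq u$ of $v$; note $u' \in C$ necessarily if $v \notin C$ (and we can also arrange $u' \in C$ when $v \in C$ by a short case check, or handle $v \in C$ directly). Now consider whether $u$ and $u'$ are monitored by the edge, i.e. whether $u$–$v$–$u'$ is the unique shortest $u$–$u'$ path. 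Since $uu' \notin E(G)$ (a triangle $uvu'$ would violate girth $\geq 5$), we have $d(u,u') = 2$, and any other $u$–$u'$ path of length $2$ would create a $4$-cycle $u v u' w u$, again violating girth $\geq 5$. Hence $uvu'$ is the unique shortest $u$–$u'$ path, so the pair $\{u, u'\} \subseteq C$ monitors both edges $uv$ and $vu'$.

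This last paragraph essentially does the whole job for any edge incident to a vertex $v\notin C$ — and every edge has such an endpoint unless both endpoints lie in $C$. So the remaining case is an edge $uv$ with $u, v \in C$ both in the cover. Here I would use that $G$ has minimum degree $2$ and apply the same trick from the $v$ side: pick a neighbor $w \neq u$ of $v$; then $uw \notin E(G)$ and there is no $4$-cycle through $u,v,w$, so $d(u,w)=2$ via the unique path $uvw$. If $w \in C$ we are done, monitoring $uv$ by $\{u,w\}$. If $w \notin C$, then repeat once more from $w$: pick a neighbor $w' \neq v$ of $w$, which must lie in $C$, and check that $u$–$v$–$w$–$w'$ is the unique shortest $u$–$w'$ path — any shortcut would force a cycle of length at most $4$, contradicting the girth. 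I expect the main obstacle to be precisely the careful bookkeeping in this final "both endpoints in $C$" situation: verifying that no competing shortest path exists (which always reduces to forbidding a short cycle, so girth $\geq 5$ suffices), and confirming that the walk $uvww'$ we build is actually an induced path of length $3$ and hence a genuine shortest path. Once these uniqueness checks are phrased uniformly as "a competing path would create a cycle of length $\leq 4$," the proof closes; connectivity and minimum degree $2$ guarantee the needed extra neighbors exist at each step.
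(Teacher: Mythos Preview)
Your Case~1 argument (when the edge has an endpoint outside the cover) is correct and is exactly what the paper does. The issue is in your Case~2, where both endpoints $u,v$ lie in $C$: you overcomplicate this, and in doing so introduce an actual error.

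The fix is trivial and is what the paper uses: if $u,v\in C$ are adjacent, then the edge $uv$ is itself the unique shortest $u$--$v$ path, so the pair $\{u,v\}\subseteq C$ already monitors $uv$. No further neighbors are needed.

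Your proposed argument for this case, by contrast, breaks down in the subcase $w\notin C$. You claim that $uvww'$ is the unique shortest $u$--$w'$ path because ``any shortcut would force a cycle of length at most~$4$.'' This is false: a length-$2$ path $uxw'$ together with $uvww'$ only produces a closed walk of length~$5$, which girth $\ge 5$ does not exclude. Concretely, take $G=C_5$ with vertices $u,v,w,w',x$ in cyclic order and vertex cover $C=\{u,v,w'\}$. Then $u,v\in C$, $w\notin C$, and $w'\in C$, but $d(u,w')=2$ via $uxw'$, so $uvww'$ is not even a shortest path. (Of course $\{u,v,w'\}$ \emph{is} an MEG-set of $C_5$; the edge $uv$ is monitored by $u$ and $v$ directly, as above.) So drop the extended detour in Case~2 and just invoke the adjacent pair $u,v$.
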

\begin{proof}
Let $G$ be a connected graph having minimum degree at least $2$ and $n$ vertices. 
Let $I$ be an independent set. Since any vertex cover of $G$ is the complement of an independent set, it is enough to show that $M= V(G) \setminus I$ is 
an MEG-set of $G$. 

Let $uv$ be any edge of $G$. If $u, v \not\in I$, then that implies $u,v \in M$, and thus, $uv$ is monitored. If $u \in I$ and $v \not\in I$, then $v \in M$. As the minimum degree of $G$ is at least $2$, $u$ must have a neighbor $w$ (say) other than $v$. Note that as $I$ is an independent set, $w$ cannot belong to $I$. This means, in particular, $w \in M$. Notice that $wuv$ must be the unique shortest path joining $w$ and $v$ as the girth of $G$ is at least $5$, and thus $uv$ is monitored. 
\end{proof}

\begin{corollary}
    \label{cor ch number}
  Let $G$ be a connected graph with girth at least $5$ having minimum degree at least $2$. If $G$ has $n$ vertices, and chromatic number $\chi(G)$, then $meg(G) \leq n\left(\frac{\chi(G)-1}{\chi(G)}\right).$ 

\end{corollary}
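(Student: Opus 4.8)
The plan is to combine Theorem~\ref{th girth 5 vertex cover} with a standard bound relating the chromatic number to the size of a smallest vertex cover (equivalently, the size of a largest independent set). First I would recall that a proper colouring of $G$ with $\chi(G)$ colours partitions $V(G)$ into $\chi(G)$ independent sets (the colour classes); by averaging, at least one colour class $I$ has size at least $n/\chi(G)$. Then $V(G)\setminus I$ is a vertex cover of $G$ of size at most $n - n/\chi(G) = n\left(\frac{\chi(G)-1}{\chi(G)}\right)$.

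Next I would invoke Theorem~\ref{th girth 5 vertex cover}: since $G$ is connected, has minimum degree at least $2$, and has girth at least $5$, any vertex cover of $G$ is an MEG-set. In particular the vertex cover $V(G)\setminus I$ constructed above is an MEG-set, so $meg(G)$ is at most its cardinality, giving
\[
meg(G) \leq n\left(\frac{\chi(G)-1}{\chi(G)}\right),
\]
as claimed.

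There is essentially no obstacle here: the corollary is a one-line consequence of the theorem together with the pigeonhole principle applied to the colour classes of an optimal proper colouring. The only point worth a sentence of care is noting that the largest colour class has size at least $\lceil n/\chi(G)\rceil \geq n/\chi(G)$, so its complement — a vertex cover — has size at most $n - n/\chi(G)$; rounding can only help, so the stated (non-integer) bound holds as written.
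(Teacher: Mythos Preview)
Your proposal is correct and matches the paper's own proof essentially line for line: the paper also observes that a proper $\chi(G)$-colouring yields an independent set of size at least $n/\chi(G)$, whose complement is a vertex cover small enough to give the bound via Theorem~\ref{th girth 5 vertex cover}.
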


\begin{proof}
    If $G$ is a connected graph with $\chi(G) = k$ and $n$ vertices, a maximum independent set of $G$ is of size at least $\frac{n}{k}$, which immediately yields the bound using Theorem~\ref{th girth 5 vertex cover}. 
\end{proof}

 We prove a more general version of Corollary \ref{cor ch number} for graphs that have pendent vertices.

\begin{corollary}
    Let $G$ be a graph with girth at least $5$, and
    $\ell$ pendent vertices. If $G$ has $n$ vertices, and  chromatic number  $\chi(G)$, then 
    $meg(G) \leq n\left(\frac{\chi(G)-1}{\chi(G)}\right) + \frac{\ell}{\chi(G)}$.
\end{corollary}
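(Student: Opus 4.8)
The plan is to reduce this statement to the preceding two corollaries by separating the pendent vertices from the rest of the graph. First I would let $L$ denote the set of $\ell$ pendent vertices of $G$, and set $G' = G - L$, the graph obtained by deleting all pendent vertices. By Lemma~\ref{lemma:simpl}, every pendent vertex belongs to every MEG-set, so an MEG-set of $G$ must contain $L$; the strategy is to build an MEG-set of $G$ of the form $L \cup M'$ where $M'$ is chosen cleverly inside $G'$ (or inside $G$), and then count.

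The key observation is that if $I$ is an independent set of $G$ that avoids all pendent vertices, then $M = V(G) \setminus I$ still contains $L$, and I claim $M$ remains an MEG-set of $G$. Re-examining the proof of Theorem~\ref{th girth 5 vertex cover}: for an edge $uv$ with $u \in I$, $v \notin I$, we used that $u$ has a second neighbor $w \ne v$ with $w \notin I$; this required $\deg(u) \ge 2$, which holds precisely because $u \in I$ is \emph{not} pendent. Edges incident to a pendent vertex $p$ are automatically monitored since $p \in M$ and $p$'s unique neighbor, being the only shortest-path endpoint constraint, forces the edge onto every shortest path from $p$ to anything. So the same argument goes through: any vertex cover of $G$ that contains all pendent vertices is an MEG-set, under the girth-$\ge 5$ hypothesis. (Note we no longer need to globally assume minimum degree $2$, only that non-pendent vertices have degree $\ge 2$, which is automatic in a connected graph with girth $\ge 5$ that is not a path — and if $G$ has a vertex of degree $\ge 2$ all its neighbors on a shortest path work; the only degenerate cases are paths, where the bound is easily checked directly.)

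Next I would bound $|M| = n - |I|$ by choosing $I$ to be a large independent set disjoint from $L$. Consider $G'' = G - L$: a proper colouring of $G''$ with $\chi(G'') \le \chi(G)$ colours has a colour class of size at least $(n - \ell)/\chi(G)$, and this class is an independent set of $G$ disjoint from $L$. Taking $I$ to be that class gives
$$meg(G) \le n - |I| \le n - \frac{n-\ell}{\chi(G)} = n\left(\frac{\chi(G)-1}{\chi(G)}\right) + \frac{\ell}{\chi(G)},$$
which is exactly the claimed bound.

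The main obstacle — really the only subtlety — is making sure the monitoring argument from Theorem~\ref{th girth 5 vertex cover} survives without the blanket minimum-degree-$2$ hypothesis, i.e.\ handling edges incident to pendent vertices and the genuinely low-degree cases. One must verify: (i) an edge $pv$ with $p$ pendent is monitored by $\{p, \cdot\}$ — true because the unique path from $p$ leaves through $v$; (ii) for an edge $uv$ with $u \in I$ non-pendent, the second neighbour $w$ of $u$ satisfies $w \notin I$ (independence) and $wuv$ is the unique $w$–$v$ shortest path (girth $\ge 5$ rules out a shorter route and rules out a second path of length $2$). A small amount of care is also needed if $G$ itself is a path, where $I$ disjoint from $L$ and the formula should be checked by hand; but a path on $\ge 3$ vertices has $meg = 2 = \ell$ and the bound holds trivially. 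Assembling these, the corollary follows.
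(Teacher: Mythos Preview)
Your argument is correct and follows essentially the same plan as the paper's: force the pendent vertices into the MEG-set, and on the remaining $n-\ell$ vertices pick a large colour class as the independent set to discard. The paper packages this slightly differently --- it first deletes vertices until a connected minimum-degree-$2$ subgraph $G'$ remains, applies Corollary~\ref{cor ch number} to $G'$ as a black box, and then asserts that an MEG-set of $G'$ together with the original pendents monitors all of $G$ --- whereas you stay inside $G$ and re-run the monitoring argument of Theorem~\ref{th girth 5 vertex cover} with the single modification that $I$ avoids the pendents. Your route is marginally more direct: it sidesteps the (not entirely trivial) verification that shortest paths in $G'$ remain shortest in $G$ and that the combined set still monitors the edges removed in passing to $G'$, and it handles the tree/forest case uniformly rather than as a separate clause.
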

\begin{proof}
    We know that the pendent vertices are part of any MEG-set of $G$ by Lemma~\ref{lemma:simpl}. Now, let us remove the minimum number of vertices from $G$ to obtain a subgraph $G'$ which is a connected graph having minimum degree $2$ (if possible).
    Notice that, to obtain $G'$, we need to remove
    at least the $\ell$ pendents. Therefore, $G'$ will have at most $(n-\ell)$ vertices. Therefore, using 
    Corollary~\ref{cor ch number} we can find an MEG-set of $G'$ having cardinality 
    $(n-\ell) \left(\frac{\chi(G)-1}{\chi(G)}\right)$
    as $\chi(G') \leq \chi(G)$.  Observe that the MEG-set of $G'$ together with the pendent vertices of $G$ can monitor all the edges of $G$. Therefore, 
    $$meg(G) \leq (n-\ell) \left(\frac{\chi(G)-1}{\chi(G)}\right) + \ell = n\left(\frac{\chi(G)-1}{\chi(G)}\right) + \frac{\ell}{\chi(G)}.$$

    If, in case,  it is not possible to obtain such a $G'$, then we can infer that $G$ is a forest. In that case, the set of all pendents is an MEG-set of $G$. 
    That means, $meg(G) \leq \ell$, which satisfies the statement trivially. 
\end{proof}

\section{Effects of clique-sum and subdivisions}\label{sec operations}
Let $G_1$ and $G_2$ be two graphs having cliques $C_1$ and $C_2$ of size $k$, respectively. A \textit{$k$-clique-sum} of $G_1$ and $G_2$, denoted by $G_1 \xor_k G_2$, is a graph obtained by identifying the vertices of $C_1$ with the vertices of $C_2$ (each vertex of $C_1$ is identified with exactly one vertex of $C_2$).

This particular operation between two graphs is a fundamental operation in graph theory, and is used for characterizing chordal graphs, maximal planar graphs, $K_5$-minor-free graphs, etc. Some variants of the definition requires deletion of all or some edges of the clique which may be important in the context of the problem solved. For example, in the context of the illustrious graph structure theorem~\cite{joret2013complete,robertson1999graph}, it
is allowed to delete some edges of the clique obtained by identification. In our context, as MEG-sets do not interact well with edge deletion, we investigate the changes in $meg(G)$ with respect to the clique-sum operation without edge deletion.

\begin{theorem}   
    Let $G_1 \xor_k G_2$ be a $k$-clique-sum of the graphs $G_1$ and $G_2$ for some $k \geq 2$. Then we have,  $$meg(G_1)+meg(G_2)-2k \leq meg (G_1 \xor_k G_2) \leq meg(G_1)+meg(G_2).$$
    Moreover, both the lower and the upper bounds are tight. 
\end{theorem}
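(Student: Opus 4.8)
The plan is to prove the two inequalities separately, then exhibit families of graphs meeting each bound exactly. For the \emph{upper bound} $meg(G_1 \xor_k G_2) \le meg(G_1)+meg(G_2)$, I would take minimum MEG-sets $M_1$ of $G_1$ and $M_2$ of $G_2$ and argue that $M_1 \cup M_2$ is an MEG-set of $G := G_1 \xor_k G_2$. The key structural fact is that the identified clique $C$ of size $k$ is a cut-clique: every shortest path of $G$ between a vertex of $G_1$ and a vertex of $G_2$ must pass through $C$, and moreover (since $C$ is a clique) distances \emph{within} $G_1$ and within $G_2$ are unchanged after the clique-sum. Thus for any edge $e$ of $G$, say $e \in E(G_1)$, a pair $u,v \in M_1$ monitoring $e$ in $G_1$ still monitors it in $G$: any shortest $u$-$v$ path in $G$ either stays inside $G_1$ (and hence passes through $e$ by assumption), or detours through $G_2$ — but such a detour would have to enter and leave $C$, and since $C$ is a clique one can shortcut it entirely inside $G_1$, contradicting that it was a shortest path unless the detour is trivial. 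This shortcutting argument, carefully stated, is the crux of the upper bound.

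For the \emph{lower bound} $meg(G) \ge meg(G_1)+meg(G_2)-2k$, I would start from a minimum MEG-set $M$ of $G$ and build MEG-sets of $G_1$ and $G_2$ from it. The natural candidates are $M_1 := (M \cap V(G_1)) \cup C$ and $M_2 := (M \cap V(G_2)) \cup C$; note $|M_1| + |M_2| \le |M| + 2k$ since vertices of $M \cap C$ are counted twice and the rest of $C$ adds at most $2k$. It remains to check $M_1$ monitors every edge of $G_1$. Given $e \in E(G_1)$, there are $u,v \in M$ monitoring $e$ in $G$; if both lie in $V(G_1)$ we are essentially done (again using that $G_1$-distances are preserved), and if one or both lie in $V(G_2)$, I would replace such a vertex by the last vertex of $C$ on the shortest path from it toward $e$ — since all these shortest paths funnel through the clique $C \subseteq M_1$, the edge $e$ remains monitored by a pair in $M_1$. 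The delicate point is verifying that this replacement does not create a second shortest path in $G_1$; this should follow because any shortest path in $G_1$ between the two surrogate vertices in $C$, concatenated back with the outer segments, would give a shortest path in $G$, and uniqueness in $G$ transfers down.

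For \emph{tightness}, the upper bound should be witnessed by gluing along an edge ($k=2$) two copies of a graph in which both endpoints of the gluing edge are forced into every MEG-set and no ``sharing'' is possible — for instance two complete graphs $K_m$ glued along an edge, where one computes $meg(K_m) = m$ and checks $meg(K_m \xor_2 K_m) = 2m = meg(K_m)+meg(K_m)$ because every vertex remains simplicial or a twin. For the lower bound, I would glue along a $K_k$ two graphs each of which is (close to) MEG-extremal but where, after the clique-sum, the $2k$ clique vertices can be dropped from the MEG-set — e.g.\ make the clique vertices become cut-vertices or universal-like vertices in the sum so that Lemma~\ref{lemma:curv} or Corollary~\ref{cor pendent neighbor not in meg} removes them; complete multipartite-type gadgets attached on each side of $K_k$ should realize $meg(G) = meg(G_1)+meg(G_2)-2k$ exactly (or up to the ``almost tight'' slack the paper advertises). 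The main obstacle I anticipate is the lower-bound direction: controlling the uniqueness-of-shortest-path condition when surrogating far-away monitoring vertices by clique vertices, since collapsing a monitoring pair onto $C$ could in principle open up alternative shortest routes, and one must use $2$-connectedness-free structure of the clique-sum (every $G_1$–$G_2$ path through $C$) very carefully to rule this out.
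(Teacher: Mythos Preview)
Your arguments for the two inequalities are exactly the paper's: take $M_1\cup M_2$ for the upper bound, and for the lower bound set $M_i=(M\cap V(G_i))\cup C$ and count $|M_1|+|M_2|\le |M|+2k$. The paper in fact states both claims with less justification than you give, so the inequality part is fine.

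The genuine gap is in your tightness examples. For the upper bound you propose $G_1=G_2=K_m$ glued along an edge and assert $meg(K_m\xor_2 K_m)=2m$. But $K_m\xor_2 K_m$ has only $2m-2$ vertices, so trivially $meg(K_m\xor_2 K_m)\le 2m-2<2m$; your own observation that every vertex is simplicial or a twin gives exactly $2m-2$, not $2m$. More generally, gluing two MEG-extremal graphs along a $K_k$ can never witness the upper bound, since the result has only $|V(G_1)|+|V(G_2)|-k$ vertices. The paper instead takes $G_1=G_2=K_k^*$, the complete graph $K_k$ with one pendant attached to each vertex: here $meg(K_k^*)=k$ (only the pendants are needed), and the $k$-clique-sum is $K_k$ with \emph{two} pendants per vertex, whose $2k$ pendants form the minimum MEG-set, so $meg=2k=k+k$. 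The point is that the clique vertices are not in the minimum MEG-sets of $G_1,G_2$, so identifying them costs nothing.

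For the lower bound you only sketch a strategy (``make the clique vertices become cut-vertices\dots complete multipartite-type gadgets''). This is the right instinct, but the actual construction is not immediate: one needs $G_1,G_2$ in which the future clique vertices are \emph{forced} into every minimum MEG-set before the sum, yet become \emph{redundant} after it. The paper builds a specific graph $H_k$ (a $K_{k+1}$ with paths and a subdivided auxiliary clique attached) with $meg(H_k)=2k+1$, and shows that after the $k$-clique-sum the $2k+2$ pendants alone monitor everything, giving $meg=2k+2=(2k+1)+(2k+1)-2k$. Your sketch does not yet produce such a pair, and the ``cut-vertex'' idea alone does not work because clique vertices of a $k$-clique-sum with $k\ge 2$ are not cut-vertices.
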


\begin{proof}
Let $M_1$ and $M_2$ be MEG-sets of $G_1$ and $G_2$. Observe that, the union $M_1 \cup M_2$ is an MEG-set of $G_1 \xor_k G_2$. This implies the upper bound. 

For the lower bound, first let $C = V(G_1) \cap V(G_2)$ be the clique of size $k$, common to 
$G_1$ and $G_2$. Let $M$ be a minimum MEG-set of $G_1 
 \xor_k G_2$. Note that, $M_1 = (M \setminus V(G_2)) \cup C$  is an MEG-set of $G_1$, and $M_2 = (M \setminus V(G_1)) \cup C$  is an MEG-set of $G_2$. Observe that,
\begin{align*}
    meg(G_1) + meg(G_2) &\leq |M_1| + |M_2| \\
    &= |(M \setminus V(G_2)) \cup C| + |(M \setminus V(G_1)) \cup C| \leq |M| + 2|C| \\
    &= meg(G_1 \xor_k G_2) +2k
\end{align*}
This proves the lower bound.

\medskip

\noindent \textit{Tightness of the upper bound:}
     Take $G_1 = G_2 = K_k^*$, where $K_k^*$ denotes the graph obtained by adding a pendent neighbor to each vertex of the complete graph $K_k$. Due to Lemma~\ref{lemma:curv}, 
     $meg(G_1)=meg(G_2)=k$. 
     Note that, there is essentially a unique way to obtain a $k$-clique-sum of $G_1$ and $G_2$, and let $G = G_1 \xor_k G_2$ be the graph obtained after performing the $k$-clique-sum. Observe that $G$ is the graph obtained by adding two pendent neighbors to each vertex of the complete graph $K_k$. Therefore, by Lemma~\ref{lemma:curv}, 
     $$meg(G_1 \xor_k G_2)=2k = k + k = meg(G_1)+meg(G_2).$$

     \medskip
     
\noindent \textit{Tightness of the lower bound:}
     First we will describe the construction of a graph $H_k$. To construct this graph, we start with a
     complete graph $K_{k+1}$ on $(k+1)$ vertices named  $v_0, v_1, v_2, \cdots, v_k$. Next, we add a $3$-path of the form $v_{i,1}v_{i,2}v_{i,3}v_i$ attached to $v_i$ for each $i \in \{0,1, \cdots, k\}$. Moreover, we add edges between the vertices 
     $v_{0,2}, v_{1,2}, \cdots, v_{k,2}$ to form a clique.
     Finally we subdivide each edge of the above mentioned clique exactly once. The so obtained graph is $H_k$. We take 
     $G_1=G_2=H_k$. However, for convenience, the vertices of $G_1$ is denoted by the original names given to the vertices of $H_k$, while the vertices of $G_2$ is denoted by placing a ``bar" over the original names given to the vertices of $H_k$. That is, instead of using the names $v_0, v_1, v_{3,2}$, we will use the names $\bar{v}_0, \bar{v}_1, \bar{v}_{3,2}$ when we want to refer to the vertices of $G_2$. 
     Let $G=G_1 \xor_k G_2$ be the graph obtained by taking a $k$-clique-sum of $G_1$ and $G_2$ on the cliques $\{v_1, v_2, \cdots, v_k\}$ and 
     $\{\bar{v}_1, \bar{v}_2, \cdots, \bar{v}_k\}$. 
     
     Observe that, $H_k$ has $(k+1)$ pendents, and thus they are part of any MEG-set of $H_k$. Next suppose that $M^* = V(H_k) \setminus \{v_0, v_1, v_{0,3}, v_{1,3}\}$. Notice that, this set is not able to monitor the edge $v_0v_1$. That means, we need to take at least $k$ vertices, other than the $(k+1)$ pendent vertices, in any MEG-set of $H_k$. On the other hand, the set obtained by the pendent vertices and all but one vertex from the $K_{k+1}$ clique gives us an MEG-set of $H_k$.  Hence, $meg(H_k)=2k+1$.

     Furthermore, $G$ has exactly $(2k+2)$ pendents, and the set of all pendents of $G$ is an MEG-set. Therefore, 
     $$meg(G)=2k+2 = (2k+1)+(2k+1)-2k = meg(G_1) +meg(G_2) -2k.$$ 
     This concludes the proof. 
\end{proof}

Let $G$ be a graph. We obtain the graph $S^{\ell}_G$ by subdividing each edge of $G$ exactly $\ell$ times. 
The graph operation subdivision is also a fundamental 
graph operation, integral in the theory of topological minors, which can be used for 
sparsification of a graph. Moreover, subdivision can be considered as the inverse operation to edge contraction, which is another fundamental notion that plays an instrumental role in the famous graph minor theorem~\cite{lovasz2006graph,robertson2004graph}. 
The following result proves a relation between $meg(G)$ and $meg(S^{\ell}_G)$.

\begin{theorem}\label{thm subdivision}
    For any graph $G$ and for all $\ell \geq 2$, 
    we have $$1 \leq \frac{meg(G)}{meg(S^{\ell}_G)} \leq 2.$$ Moreover, the lower bound is tight, and the upper bound is asymptotically tight. 
\end{theorem}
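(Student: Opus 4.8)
The plan is to establish the two inequalities and then exhibit the extremal examples. For the upper bound $meg(G) \le 2\, meg(S^\ell_G)$, I would start from a minimum MEG-set $M'$ of $S^\ell_G$. The subdivision vertices of an edge $e = xy$ of $G$ form an internal path; the ``original'' vertices of $G$ sit inside $S^\ell_G$ as a natural subset $V(G) \subseteq V(S^\ell_G)$. The idea is to map each vertex $w \in M'$ to one or two vertices of $V(G)$: if $w$ is itself an original vertex, keep it; if $w$ is a subdivision vertex on the path replacing edge $xy$, replace it by both endpoints $x$ and $y$. Call the resulting set $M \subseteq V(G)$, so $|M| \le 2|M'|$. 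I then need to argue $M$ is an MEG-set of $G$: given an edge $e = xy \in E(G)$, one of the $\ell$ subdivision edges on its path, say $e'$, is monitored in $S^\ell_G$ by some pair $u', v' \in M'$; since distances in $S^\ell_G$ are exactly $(\ell+1)$ times the corresponding distances in $G$ along original-vertex routes (plus offsets for the endpoints), uniqueness of shortest $u'$-$v'$ paths through $e'$ should translate to uniqueness of a shortest path through $e$ in $G$ between the (at most two) original vertices associated to $u'$ and $v'$. The lower bound $meg(G) \le 2\, meg(S^\ell_G)$ only is what we want; the inequality $1 \le meg(G)/meg(S^\ell_G)$, i.e.\ $meg(S^\ell_G) \le meg(G)$, I would get by showing that a minimum MEG-set $M$ of $G$, viewed as a subset of $V(S^\ell_G)$, is still an MEG-set of $S^\ell_G$: every edge $e'$ of $S^\ell_G$ lies on the unique path replacing some edge $e = xy$ of $G$, and if $u,v \in M$ monitor $e$ in $G$ then the (essentially unique) lift of the shortest $u$-$v$ path to $S^\ell_G$ must use the whole replacement path of $e$, hence $e'$; care is needed to check that subdividing does not create new shortest paths, which it cannot since each original edge is uniformly subdivided.

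For tightness of the lower bound, I would look for a family where $meg(G) = meg(S^\ell_G)$. A natural candidate is a graph dominated by forced vertices whose status is unaffected by subdivision: e.g.\ take $G$ with many pendant vertices and few non-forced internal vertices, or more simply a cycle $C_n$ (for which $meg(C_n)$ is known from \cite{foucaud2023monitoring} and $S^\ell_{C_n} = C_{n(\ell+1)}$ has the same MEG-number when $n$ is large enough), or a complete graph with a pendant on each vertex — one checks $meg$ is unchanged. I expect a clean equality example to require only a short computation using Lemma~\ref{lemma:simpl} and Corollary~\ref{cor pendent neighbor not in meg}.

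For asymptotic tightness of the upper bound, I would construct a family $G_t$ with $meg(G_t)/meg(S^\ell_{G_t}) \to 2$. The heuristic is that in $G$ the ``simplicial/twin'' pressure forcing vertices into every MEG-set can be halved by subdivision: a subdivided path vertex can monitor two original edges at once from a single well-placed pair. A concrete attempt: take a large complete graph $K_t$ with a pendant attached to each vertex (so $meg = t$ by Corollary~\ref{cor pendent neighbor not in meg}), or a suitable ``book''-like / friendship-type graph, and compare with its $\ell$-subdivision, where the subdivision destroys simpliciality of many vertices and lets a linear-in-$t$ but roughly half-sized set suffice; then take $t \to \infty$.

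The main obstacle I anticipate is the careful bookkeeping of shortest paths under subdivision — specifically, proving that monitoring is preserved in both directions. One must verify that when an edge is subdivided uniformly $\ell$ times, (a) no pair of vertices acquires an additional shortest path that did not correspond to one in $G$, and (b) the ``blow-up'' of a monitoring pair $u',v'$ in $S^\ell_G$ to original vertices does not accidentally open up an alternative shortest route in $G$; handling the endpoints of the subdivided edge (where $u'$ or $v'$ may lie strictly inside a path, making the associated original-vertex pair ambiguous or forcing the factor $2$) is the delicate case, and getting the extremal family to actually hit the ratio $2$ rather than $3/2$ will require choosing the base graph so that essentially every forced vertex pairs up perfectly after subdivision.
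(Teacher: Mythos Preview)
Your approach to both inequalities is exactly what the paper does: lift a minimum MEG-set of $G$ into $S^\ell_G$ for the direction $meg(S^\ell_G)\le meg(G)$, and for the other direction replace each subdivision vertex of $M'$ by the two endpoints of the edge it sits on. Your choice of cycles for tightness of the lower bound also matches the paper (it takes $G=C_n$, $n\ge 5$, so $S^\ell_G=C_{n(\ell+1)}$ and both have $meg=3$).

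The genuine gap is in your candidate for asymptotic tightness of the upper bound. Taking $K_t$ with a pendant attached to each vertex gives $meg(G)=t$, as you say, but it does \emph{not} give a ratio approaching $2$. After $\ell$-fold subdivision the $t$ pendant vertices remain pendant (degree $1$) in $S^\ell_G$, so they are still forced into every MEG-set by Lemma~\ref{lemma:simpl}; moreover one checks that these $t$ leaves already monitor every edge of $S^\ell_G$, since the unique shortest path between any two of them passes through the subdivided clique edge joining their neighbours. Hence $meg(S^\ell_G)=t$ as well, and the ratio is $1$, not close to $2$. Your heuristic that ``subdivision destroys simpliciality'' fails precisely for leaves, which survive subdivision intact. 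The paper instead takes the ladder $G=P_k\,\Box\,P_2$, for which $meg(G)=2k$ (every vertex is forced), and then exhibits an explicit MEG-set of $S^\ell_G$ of size $k+1$ consisting of the two corner vertices $v_{0,0},v_{0,1}$ together with one subdivision vertex on each rung $v_{i,0}v_{i,1}$ for $i=1,\dots,k-1$; this yields ratio $2k/(k+1)\to 2$. The point you are missing is that you want $G$ to be MEG-extremal (or nearly so) with \emph{no} pendants, so that subdivision genuinely relaxes the constraints rather than preserving the forcing.
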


\begin{proof}
Let $M$ be an MEG-set of $G$. Then observe that, $M$ is also an MEG-set of $S^{\ell}_G$. 
This proves $meg(S^{\ell}_G) \leq meg(G)$. 

On the other hand, let $M'$ be an MEG-set of $S^{\ell}_G$. 
Now we construct an MEG-set $M$ of $G$ using $M'$.
Let $v$ be a vertex of $M'$. If $v$ is also a vertex of  $G$, then we put $v$ in $M$. If $v$ is not a vertex of $G$, then $v$ must be one of the vertices which was used for subdividing an edge $e$ of $G$ to obtain $S^{\ell}_G$. 
In such a case, we put both the end points of $e$ in $M$. Note that, the so-obtained $M$ is an MEG-set of $G$. As $|M| \leq 2|M'|$, we have $meg(G)  \leq 2 meg(S^{\ell}_G) $. 
This completes the proof of the inequality. 

\medskip

\noindent \textit{Tightness of the lower bound:} Given any positive integer $n \geq 2$, take $G$ to be a tree with $n$ leaves, then $S^{\ell}_G$ is also a tree with $n$ leaves, we know that $meg(G) = meg(S^{\ell}_G) = n$~\cite{foucaud2023monitoring}. Thus, we have infinitely many examples of $G$ where, 
$$\frac{meg(G)}{meg(S^{\ell}_G)} = 1.$$ 

    \noindent \textit{Asymptotic tightness of the upper bound:}  Let $G = P_k ~\Box~ P_2$, that is the Cartesian product of the paths $P_k$ (on $k$ vertices) and $P_2$ (on $2$ vertices). For convenience, let us assume that $G$ is embedded on the plane with its vertices placed on the points $(i,j)$, where $i \in \{0, 1, \cdots, k-1\}$ and $j \in \{0,1\}$. Moreover, let the vertex placed on $(i,j)$ be denoted by $v_{i,j}$. 
    Two vertices of this graph are adjacent if they are at a Euclidean distance exactly $1$. We know that
     $meg(G) = 2k$~\cite{foucaud2023monitoring}.
     We will construct an MEG-set $M$ of size $(k+1)$ for $S^{\ell}_G$. First of all, put the vertices $v_{0,0}$ and $v_{0,1}$ in $M$. Let $w_i$ be a vertex (choose any option) on the path obtained by subdividing the edge $v_{i,0}v_{i,1}$. Put $w_i$ in $M$ for all $i \in \{1,2, \cdots, k-1\}$. Observe that, the so-obtained $M$ monitors $S^{\ell}_G$. Therefore, $meg(S^{\ell}_G)\leq k+1$.
     Hence, $$\frac{meg(G)}{meg(S^{\ell}_G)} \leq \frac{2k}{k+1} = 2 - \frac{2}{k+1}.$$
     As $k$ tends to infinity, the ratio $\frac{meg(G)}{meg(S^{\ell}_G)}$ tends to the upper bound $2$. Hence the upper bound is asymptotically tight. 
\end{proof}

\section{Concluding remarks}\label{sec conclusions}
\begin{enumerate}[(1)]
\item  In Section \ref{sec gap}, we gave examples of graphs $G_{a,b,c,d}$  which attains 
$g(G_{a,b,c,d}) = a$, 
$eg(G_{a,b,c,d}) = b$,
$seg(G_{a,b,c,d}) = c$,
and 
$meg(G_{a,b,c,d}) = d$ for ``almost" all  
$2 \leq a \leq b \leq c \leq d$. 
However, for some of the combinations of $a,b,c,d$ we still do not know if an example exists or not. One problem to consider is to decide exactly for which prescribed values of $a,b,c,d$, such a graph $G_{a,b,c,d}$ exists, along with finding an explicit example.

\item In Section~\ref{sec megfull}, we have proved a necessary and sufficient condition for a vertex to be part of any minimum MEG-set. A question in this direction is to find a necessary and sufficient condition for a vertex not to be in any minimum MEG-set. 

\item In Section~\ref{sec:ext}, we used the result of Theorem~\ref{thm meg full}, as well as a more comprehensive definition of MEG-extremal, to prove that several well-known graph classes, 
restricted to $2$-connected graphs, are MEG-extremal. We provide an example of a $2$-connected interval graph that is not MEG-extremal. Thus, it will be interesting to 
characterize or devise efficient algorithms to find $meg(G)$ when $G$ is an interval graph, or a chordal graph. One can also try to extend the results of 
Theorem~\ref{thm:extcographs} and~\ref{thm:extpropintervalgraphs} to perfect graphs.

\item In Section~\ref{sec operations}, we deal with the effects on $meg(G)$ with respect to some fundamental graph operations like clique-sums, and subdivisions. It will be interesting to perform similar studies with respect to other fundamental graph operations such as vertex deletion, edge deletion, edge contraction, graph powers, etc. One could also wonder how the parameter interacts with bounded-diameter subgraph identification, as a direct generalization of our results.

\end{enumerate}

\medskip

\noindent \textbf{Acknowledgements:} This work is partially supported 
by the following projects: IFCAM (MA/IFCAM/18/39), SERB-MATRICS (MTR/2021/000858 and MTR/2022/000692), French government IDEX-ISITE initiative 16-IDEX-0001 (CAP 20-25), International Research Center "Innovation Transportation and Production Systems" of the I-SITE CAP 20-25, and ANR project GRALMECO (ANR-21-CE48-0004), Doctoral Fellowship in India for ASEAN (DIA:2020-25).

%
%
%
%
\bibliographystyle{abbrv}
\bibliography{reference(simple)}

\end{document}